\newtheorem{theorem}{Theorem}
\newtheorem{proposition}{Proposition}
\newtheorem{lemma}{Lemma}
\def \h#1{\widehat{#1}}
\def \t#1{\widetilde{#1}}
\def \b#1{\overline{#1}}
\def \ub#1{\underline{#1}}
\def \th#1{\widehat{\widetilde{#1}}}
\begin{document}

\title{ On decomposition of the ABS lattice equations and related B\"{a}cklund transformations  }

\author{Danda Zhang,~~ Da-jun Zhang\footnote{Corresponding author. Email: djzhang@staff.shu.edu.cn}\\
{\small  Department of Mathematics, Shanghai University, Shanghai 200444, P.R. China}
}
\date{\today}
\maketitle

\begin{abstract}

The Adler-Bobenko-Suris (ABS) list contains all scalar quadrilateral equations which are consistent around the cube.
Each equation in the ABS list admits a beautiful decomposition.
In this paper, we first revisit these  decomposition formulas, by which we construct B\"{a}cklund transformations (BTs)
and consistent triplets.
Some BTs are used to construct new solutions, lattice equations and weak Lax pairs.

\vspace {10pt}
\noindent
{\bf Keywords:} ABS list,  decomposition, B\"{a}cklund transformation, solutions, weak Lax pair
\\
{\bf PACS numbers:}  02.30.Ik, 05.45.Yv

\end{abstract}

\section{Introduction}\label{sec-1}

It is well known that discrete integrable systems play important roles in the research of variety of areas such as statistic physics,
discrete differential geometry and discrete Painlev\'e theory.
Quadrilateral equations are partial difference equations defined by four points (see Fig.\ref{F:1}(a)), with a form
\begin{equation}\label{Q}
    Q(u,\t{u},\h{u},\h{\t{u}};p,q)=0,
\end{equation}
where $u$ is a function of discrete variables $n, m$, constants $p, q$ serve as spacing parameters of $n$-direction and $m$-direction, respectively.
We use short hand notations:
\begin{equation*}
   u=u_{n,m},~~ \t{u}=u_{n+1,m},~~ \h{u}=u_{n,m+1},~~ \th{u}=u_{n+1,m+1}.
\end{equation*}
In a beautiful paper \cite{ABS03} Adler Bobenko and Suris (ABS) classified all quadrilateral equations
with the assumption of consistency-around-the-cube: (i) affine linear w.r.t.  $u,\t{u},\h{u},\h{\t{u}}$;
(ii) equation \eqref{Q} having D4 symmetry;
(iii) tetrahedron property, i.e. embedding equation \eqref{Q} on the six faces of the cube in Fig.\ref{F:1}(b),
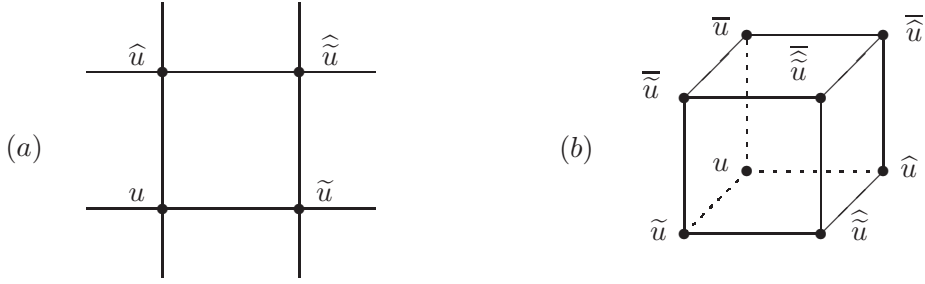
\begin{figure}[h]
\setlength{\unitlength}{0.0004in}
\hspace{2cm}
\begin{picture}(3482,2813)(0,-10)
\put(-800,1510){\makebox(0,0)[lb]{$(a)$}}
\put(1275,2708){\circle*{150}}
\put(825,2808){\makebox(0,0)[lb]{$\h u$}}
\put(3075,2708){\circle*{150}}
\put(3375,2808){\makebox(0,0)[lb]{$\h{\t u}$}}
\put(1275,908){\circle*{150}}
\put(825,1008){\makebox(0,0)[lb]{$u$}}
\put(3075,908){\circle*{150}}
\put(3300,1008){\makebox(0,0)[lb]{$\t u$}}
\drawline(275,2708)(4075,2708)
\drawline(3075,3633)(3075,0)
\drawline(275,908)(4075,908)
\drawline(1275,3633)(1275,0)
\end{picture}
\hspace{4cm}
\begin{picture}(3482,3700)(0,-500)
\put(-1200,1000){\makebox(0,0)[lb]{$(b)$}}
\put(450,1883){\circle*{150}}
\put(-100,1883){\makebox(0,0)[lb]{$\b{\t u}$}}
\put(1275,2708){\circle*{150}}
\put(825,2708){\makebox(0,0)[lb]{$\b u$}}
\put(3075,2708){\circle*{150}}
\put(3375,2633){\makebox(0,0)[lb]{$\b{\h u}$}}
\put(2250,83){\circle*{150}}
\put(2650,8){\makebox(0,0)[lb]{$\h{\t u}$}}
\put(1275,908){\circle*{150}}
\put(825,908){\makebox(0,0)[lb]{$u$}}
\put(2250,1883){\circle*{150}}
\put(1850,2108){\makebox(0,0)[lb]{$\b{\h{\t u}}$}}
\put(450,83){\circle*{150}}
\put(0,8){\makebox(0,0)[lb]{$\t u$}}
\put(3075,908){\circle*{150}}
\put(3300,833){\makebox(0,0)[lb]{$\h u$}}
\drawline(1275,2708)(3075,2708)
\drawline(1275,2708)(450,1883)
\drawline(450,1883)(450,83)
\drawline(3075,2708)(2250,1883)
\drawline(450,1883)(2250,1883)
\drawline(3075,2633)(3075,908)
\dashline{60.000}(1275,908)(450,83)
\dashline{60.000}(1275,908)(3075,908)
\drawline(2250,1883)(2250,83)
\drawline(450,83)(2250,83)
\drawline(3075,908)(2250,83)
\dashline{60.000}(1275,2633)(1275,908)
\end{picture}
\caption{(a). The points on which equation \eqref{Q} is defined. (b). the
  consistency cube.\label{F:1}}
\end{figure}
where shift for the third direction is denoted by bar with spacing parameter $r$ and the six equations are
\begin{eqnarray}
\begin{aligned}\label{BTCAC}
&&Q(u,\t{u},\h{u},\h{\t{u}};p,q)=0,~~Q(\b{u},\b{\t{u}},\b{\h{u}},\b{\h{\t{u}}};p,q)=0,\\
&&Q(u,\t{u},\b{u},\t{\b{u}};p,r)=0,~~Q(\h{u},\h{\t{u}},\h{\b{u}},\h{\t{\b{u}}};p,r)=0,\\
&&Q(u,\h{u},\b{u},\h{\b{u}};q,r)=0,~~Q(\t{u},\t{\h{u}},\t{\b{u}},\t{\h{\b{u}}};q,r)=0,
\end{aligned}\end{eqnarray}
with initial value $u, \t u, \h u, \b u$,
  the final value of $\th{\b u}$ only depends on $\t u, \h u, \b u$ and is same no mater
which equation is finally used to calculate it.
ABS list includes all the quadrilateral equations that are consistent around the cube (CAC):
\begin{subequations}\label{ABS}
\begin{align}
\text{H1:} \ \ & (u-\h{\t{u}})(\t{u}-\h{u})-p+q=0,\\
\text{H2:} \ \ & (u-\h{\t{u}})(\t{u}-\h{u})-(p-q)(u+\t{u}+\h{u}+\h{\t{u}}+p+q)=0,\\
\text{H3($\delta$):} \ \ & p(u\t{u}+\h{u}\h{\t{u}})-q(u\h{u}+\t{u}\h{\t{u}})+\delta(p^2-q^2)=0,\\
\text{A1($\delta$):} \ \ & p(u+\h{u})(\t{u}+\h{\t{u}})-q(u+\t{u})(\h{u}+\h{\t{u}})-\delta^2pq(p-q)=0, \\
\text{A2:} \ \ & p(1-q^2)(u\t{u}+\h{u}\h{\t{u}})-q(1-p^2)(u\h{u}+\t{u}\h{\t{u}}) \nonumber \\
&-(p^2-q^2)(1+u\t{u}\h{u}\h{\t{u}})=0,\\
\text{Q1($\delta$):} \ \ & p(u-\h{u})(\t{u}-\h{\t{u}})-q(u-\t{u})(\h{u}-\h{\t{u}})+\delta^2pq(p-q)=0, \label{Q1}\\
\text{Q2:} \ \ & p(u-\h{u})(\t{u}-\h{\t{u}})-q(u-\t{u})(\h{u}-\h{\t{u}}) \nonumber \\
&+pq(p-q)(u+\t{u}+\h{u}+\h{\t{u}}-p^2+pq-q^2)=0,\\
\text{Q3($\delta$):} \ \ & p(1-q^2)(u\h{u}+\t{u}\h{\t{u}})-q(1-p^2)(u\t{u}+\h{u}\h{\t{u}}) \nonumber \\
&-(p^2-q^2)\left(\t{u}\h{u}+u\h{\t{u}}+\frac{\delta^2(1-p^2)(1-q^2)}{4pq}\right)=0,\\
\text{Q4:} \ \ & \mathrm{sn}(p)(u\t{u}+\h{u}\h{\t{u}})-\mathrm{sn}(q)(u\h{u}+\t{u}\h{\t{u}})-\mathrm{sn}(p-q)(\t{u}\h{u}+u\h{\t{u}} ) \nonumber \\
&+\mathrm{sn}(p)\mathrm{sn}(q)\mathrm{sn}(p-q)(1+k^2u\t{u}\h{u}\h{\t{u}})=0,\label{Q4}
\end{align}
\end{subequations}
where $\delta$ is an arbitrary parameter, $\mathrm{sn}(p)=\mathrm{sn}(p;k)$ is the Jacobi elliptic function
and $Q4$ in the above form was given by Hietarinta \cite{Hie05}.
Consistency-around-the-cube is viewed as integrability of the ABS equations.
If the top equation is considered to be the same as the bottom equation but with $\b u$ as a new solution,
then the side equations as a coupled system, i.e.
\begin{equation}
Q(u,\t{u},\b{u},\t{\b{u}};p,r)=0,~~ Q(u,\h{u},\b{u},\h{\b{u}};q,r)=0
\end{equation}
automatically provides a B\"{a}cklund transformation (BT) for the bottom equation \eqref{Q}.
Its linearized form obtained by introducing $\b u=g/f$ acts as a Lax pair of equation \eqref{Q}.
If imposing different ABS equations on the bottom and top  respectively, and connecting them via suitable limit procedures,
then the two side equations will give a BT of the two original equations.
This beautiful idea was demonstrated in \cite{A08} by Atkinson.
Usually by an auto-BT we mean it connects different solutions of same equation while by a nonauto-BT we mean it connects solutions of two different equations.

It is well known that BT originated from the construction  of pseudo-spherical surfaces
and BTs have been playing important roles in  soliton theory \cite{Miu-1976,PruS-1998,RogS-2002}.
In this paper, we will consider the ABS list and focus on those BTs that can be constructed by using
decomposition property of the ABS equations.
In fact, each equation $Q=0$ in the ABS list admits a decomposition which is an analogue of the following,
\begin{equation}
\mathcal{H}= h^{12}h^{34}-h^{13}h^{24}=PQ, \ \ P=\left|\begin{array}{ccc}
   Q& Q_{u_1}&Q_{u_4}\\
   Q_{u_2}& Q_{u_1u_2}&Q_{u_2u_4}\\
   Q_{u_3}& Q_{u_1u_3}&Q_{u_3u_4}\\
  \end{array}
\right|, \label{PQ}
\end{equation}
where $u_1=u,~ u_2=\t u,~ u_3=\h u,~ u_4=\th u$,
\begin{equation}
h^{ij}(u_i,u_j)=Q_{u_k}Q_{u_l}-QQ_{u_ku_l},
\label{hQ}
\end{equation}
$Q_{u_k}=\partial_{u_k}Q,~ Q_{u_iu_k}=\partial_{u_i}\partial_{u_k}Q$, and $i,j,k,l$ are distinct elements in $\{1,2,3,4\}$.
The composition \eqref{PQ} played a crucial role in the classification of the ABS list and further
discussions with less restriction than the CAC condition \cite{ABS09,RB11,Boll-phd}.
$h^{ij}$ are used to study singularity structures of solutions,
which is considered to be related to the CAC property and boundary value structures (cf.\cite{A11,AJ13}).
The decomposition \eqref{PQ} and $h$ polynomials can be used to construct BTs.
For example, H3($\delta$) has the following decomposition:
\[\mathcal{H}=h(u,\t{u},p) {h}(\h u,\th{u},p)-h(u,\h{u},q) {h}(\t u,\th{u},q) = H3(\delta)=0,\]
where $h$ is defined by $h(u,\t{u},p)=u\t{u}+p\delta$.
This implies that with unknown function $U$, the pair
\begin{equation}
h(u,\t{u},p)=  U\t U,~~ h(u,\h{u},q)= U \h U
\label{h-U}
\end{equation}
will provide  a BT to connect H3($\delta$) and the $U$-equation
which comes from the compatibility of  the pair in term of $u$.
In this paper we will examine the system \eqref{h-U} for  some ABS equations.
As a result, when $h$ is affine linear, a completed list of BTs together with CAC equations that they connect are given.
When $h$ is beyond  affine linear, some quadratic CAC equations are obtained.
Some BTs can be used to construct new solutions  and weak Lax pairs.

The paper is organized as follows.
In Sec.\ref{sec-2} we revisit decomposition of the ABS list.
In Sec.\ref{sec-3} we discuss possible forms of $h$ and the related quadrilateral equations of $u$ and $U$,
which are listed in Table 1 and 2.
Sec.\ref{sec-4} includes some examples as applications, a new weak Lax pair of Q1(0), new polynomial solutions
of  Q1($\delta$) and rational solutions of H3*($\delta$) in Casoratian form.
Finally, Sec.\ref{sec-5} is for conclusions.

\section{Decomposition \eqref{PQ} of the ABS list}\label{sec-2}

\subsection{Decomposition \eqref{PQ}: revisited}\label{sec-2-1}

Let us revisit the decomposition \eqref{PQ} and have a look at the relation of $Q$ and  $P$ from a generic viewpoint.
Consider  $Q(u,\t u,\h u, \th u; p,q)$ to be a general quadrilateral affine linear polynomial:
\begin{align}
Q(u,\t u,\h u, \th u; p,q)=& k u\t u \h u\th u+l_1u\t u\h u+l_2u\t u\th u+l_3u\h u\th u+l_4\t u\h u\th u \nonumber\\
& +p_1 u \t u + p_2 \t u \h u + p_3 \h u \th u + p_4 u \th u + p_5 u \h u + p_6 \t u\th u \nonumber\\
& +q_1 u +q_2 \t u+ q_3 \h u+ q_4 \th u +c,
\label{Q-gen}
\end{align}
where $k, l_i, p_i, q_i, c$ are constants.
Let $\mathcal{P}_s^t$ denote a set of polynomials with $s$ distinct variables in $\{u,\t u,\h u, \th u\}$
and at most degree $t$ for each variable.
With this definition, the most general element in $\mathcal{P}_4^1$ is $Q$ defined in \eqref{Q-gen} and  $h^{ij}$ belongs to $\mathcal{P}_2^2$.
For the above general $Q$, the decomposition \eqref{PQ} holds and $P\in \mathcal{P}_4^1$ \cite{ABS09}.
In fact, the discriminant of $h^{ij}$ plays an important role in the classification of integrable quadrilateral equations.
With less restriction than the CAC condition, the classification of $Q$-type equations was done in  \cite{ABS09}
and a full classification was finished in \cite{RB11,Boll-phd}.

In the following let us take a close look at the relation between $Q$ and $P$ in \eqref{PQ}. Similar to \eqref{hQ} we define
\begin{equation}
g^{ij}(u_i,u_j)=P_{u_k}P_{u_l}-P P_{u_ku_l},
\label{gP}
\end{equation}
where $P$ is defined in \eqref{PQ}. Then we have the following.

\begin{proposition}\label{ProPQ1}
For the polynomial $Q$ given in \eqref{Q-gen}, $P$ defined in \eqref{PQ} and $h^{ij}$ in \eqref{hQ}, there is a constant $K$ such that
\[g^{ij}=(-1)^{j-i}K h^{ij},~~ (i,j)\in \{(1,2), (1,3), (2,4), (3,4)\}.\]
Particularly, when $K=0$, $P$ can be factorized as a product of distinct linear function $a_iu_i+b_i$.
\end{proposition}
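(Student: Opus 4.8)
The plan is to reduce everything to a single algebraic identity relating the operator that produces the $h$-polynomials to products of $P$ and $Q$. First I would record the elementary but crucial fact that for any affine-linear $F$, writing $F=Au_ku_l+Bu_k+Cu_l+D$ with $A,B,C,D$ depending only on $u_i,u_j$, one has $F_{u_k}F_{u_l}-F\,F_{u_ku_l}=BC-AD$; in particular this quantity depends only on $u_i,u_j$, which is exactly why $h^{ij}$ and $g^{ij}$ are functions of $(u_i,u_j)$ alone. Introduce the quadratic operator $L^{ij}[F]=F_{u_k}F_{u_l}-F\,F_{u_ku_l}$, so that $h^{ij}=L^{ij}[Q]$ and $g^{ij}=L^{ij}[P]$. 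Expanding $L^{ij}[PQ]$ by the product rule, the mixed terms cancel in pairs and one is left with the clean identity
\begin{equation}
L^{ij}[PQ]=Q^2\,g^{ij}+P^2\,h^{ij}. \label{star}
\end{equation}
Since $\mathcal{H}=PQ$, the left-hand side is $L^{ij}[\mathcal{H}]$, which I can evaluate directly from $\mathcal{H}=h^{12}h^{34}-h^{13}h^{24}$.

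Next I would exploit that each $h^{ab}$ depends only on $(u_a,u_b)$. For a pair $(i,j)$ occurring in the decomposition, say $(i,j)=(1,2)$ with complement $(k,l)=(3,4)$, the factor $h^{12}$ is annihilated by $\partial_{u_3}$ and $\partial_{u_4}$, while in $L^{12}[\mathcal{H}]$ the contributions quadratic in $h^{13}h^{24}$ cancel identically. Every surviving term then carries an overall factor $h^{12}$, so $h^{ij}\mid L^{ij}[\mathcal{H}]$ and we may write $L^{ij}[\mathcal{H}]=h^{ij}R$ for a polynomial $R$. By \eqref{star} the left side has degree at most two in each of $u_k,u_l$, hence so does $R$. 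Substituting back into \eqref{star} gives $Q^2g^{ij}=h^{ij}(R-P^2)$; since $h^{ij}$ is free of $u_k,u_l$ and $Q$ depends genuinely on them (so $\gcd(Q,h^{ij})=1$), $Q^2$ must divide $R-P^2$, and a degree count in $u_k,u_l$ forces $R-P^2=c_{ij}Q^2$ with $c_{ij}$ independent of $u_k,u_l$. Therefore $g^{ij}=c_{ij}h^{ij}$, and comparing the total degrees of $g^{ij}$ and $h^{ij}$ in $(u_i,u_j)$ shows $c_{ij}$ is a constant; careful sign bookkeeping in $L^{ij}$ over the four admissible pairs produces exactly the factor $(-1)^{j-i}$, so $c_{ij}=(-1)^{j-i}K_{ij}$.

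The main obstacle is to show that the four constants coincide, i.e. that $K_{ij}=K$ is pair-independent; a priori the argument above delivers one constant per pairing. I would close this gap in two stages. Applying the decomposition once more to $P\in\mathcal{P}_4^1$ yields $\mathcal{H}_P=g^{12}g^{34}-g^{13}g^{24}=\tilde{P}\,P$, and substituting $g^{ij}=(-1)^{j-i}K_{ij}h^{ij}$ gives $\mathcal{H}_P=K_{12}K_{34}\,h^{12}h^{34}-K_{13}K_{24}\,h^{13}h^{24}$; matching this against $\mathcal{H}_P=\tilde{P}\,P$, which shares the factor $P$ with $\mathcal{H}=PQ$, forces $K_{12}K_{34}=K_{13}K_{24}$ and $\tilde{P}\propto Q$. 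The remaining individual equalities I would pin down by comparing the coefficient of $u_k^2u_l^2$ on both sides of $R-P^2=c_{ij}Q^2$ for the generic polynomial \eqref{Q-gen}: this expresses each $K_{ij}$ through fixed coefficients of $Q$ and $P$ and exhibits the common value $K$. Finally, for the degenerate case $K=0$ all four $g^{ij}$ vanish, i.e. $B'C'=A'D'$ in every complementary pairing of $P$; this is precisely the rank-one separation condition, so $P$ factors across each pair of variables, and combining these separations for a genuinely multiaffine $P$ yields $P=\prod_{i=1}^{4}(a_iu_i+b_i)$, the distinctness of the linear factors following from the non-degeneracy of $P$ in each variable.
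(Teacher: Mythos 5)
Your proposal is essentially correct for generic $Q$ and takes a genuinely different route from the paper. The paper simply asserts, as the outcome of a direct calculation, that $g^{ij}=(-1)^{j-i}Kh^{ij}$ with one and the same $K$, a priori a rational function of the $u_i$; its only structural steps are the duality argument you also invoke ($P\in\mathcal{P}_4^1$, so the decomposition \eqref{PQ} applies with $P$ in place of $Q$ and yields $K^2Q\in\mathcal{P}_4^1$, forcing $K$ constant) and, for $K=0$, integrating $(\ln P)_{u_ku_l}=g^{ij}/P^2=0$ to obtain the separated product --- the logarithmic twin of your rank-one factorization. What your route buys is a structural explanation of the proportionality itself: the identity $L^{ij}[PQ]=Q^2g^{ij}+P^2h^{ij}$ for $L^{ij}[F]=F_{u_k}F_{u_l}-FF_{u_ku_l}$ is correct (the mixed terms do cancel), the divisibility $h^{ij}\mid L^{ij}[\mathcal{H}]$ checks out, and together they replace brute-force computation of $g^{ij}$ by a divisibility-and-degree count. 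Two caveats, though. First, your gcd and degree arguments tacitly assume $Q$ is irreducible and that $h^{ij}$ has full bidegree $(2,2)$: if $h^{ij}$ is affine linear or constant --- which is exactly the H-type degeneration the paper highlights after the proposition --- then $g^{ij}=S\,h^{ij}$ with $S=S(u_i,u_j)$ does not force $S$ constant by degree count alone, so you need either a genericity/specialization argument or to handle these cases directly (there $P$ is constant, so $g^{ij}\equiv 0$ and $K=0$ trivially). Second, your duality step only yields the product relation $K_{12}K_{34}=K_{13}K_{24}$; the coincidence of all four constants and the sign $(-1)^{j-i}$ are still deferred to an unperformed coefficient comparison. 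That is no worse than the paper, whose treatment of precisely this point is ``by direct calculation we find,'' but you should be aware that neither argument is fully structural there.
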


\begin{proof}
It has been proved that $P\in \mathcal{P}_4^1$  \cite{ABS09}.
Thus we can switch the roles of $Q$ and $P$ in \eqref{PQ}.
For the polynomial $Q$ given in \eqref{Q-gen}, $P$ defined in \eqref{PQ} and $h^{ij}$ in \eqref{hQ},
by  direct calculation we find
\begin{equation}
    g^{ij}=(-1)^{j-i}K h^{ij},~~ (i,j)\in \{(1,2), (1,3), (2,4), (3,4)\},
    \label{gh}
\end{equation}
with a same  $K$ which is an irreducible rational function of $\{u_i\}$.
Similarly we find that
This implies
\begin{equation*}
  g^{12}g^{34}-g^{13}g^{24}=K^2QP.
\end{equation*}
Corresponding to the structure in \eqref{PQ}, there must be
\begin{equation*}
K^2Q=\left|\begin{array}{ccc}
   P& P_{u_1}&P_{u_4}\\
   P_{u_2}& P_{u_1u_2}&P_{u_2u_4}\\
   P_{u_3}& P_{u_1u_3}&P_{u_3u_4}\\
  \end{array}
  \right| \in\mathcal{P}_4^1.
\end{equation*}
Since $Q\in\mathcal{P}_4^1$ the only choice for $K$ is a constant.

If $K=0$, then $g^{ij}=0$ in light of \eqref{gh}.
Noticing that
\begin{equation*}
   ( \ln P)_{u_ku_l}
   =\frac{g^{ij}}{P^2},
\end{equation*}
if  $g^{ij}=0$ we have
\begin{equation*}
\ln P = \psi_1(u_1)+\psi_2(u_2)+\psi_3(u_3)+\psi_4(u_4),
\end{equation*}
where $\psi_i(u_i)$ is a function of $u_i$.
This means  $P\in \mathcal{P}_4^1$ can be factorized as
\begin{equation*}
P=\phi_1(u_1)\phi_2(u_2)\phi_3(u_3)\phi_4(u_4),
\end{equation*}
where there must be $\phi_i(u_i)=a_iu_i+b_i$ because $P\in \mathcal{P}_4^1$.

\end{proof}

The above proposition reveals an ``adjoint" relation between $Q$ and $P$ if $Q$ is an affine-linear quadrilateral polynomial \eqref{Q-gen}.
For A-type and Q-type  ABS equations, one can see that  $Q$ and $P$ are almost same.

\subsection{Decomposition of the ABS list}\label{sec-2-2}

For each equation \eqref{Q} in the ABS list  it holds that \cite{ABS03}
\begin{equation}\label{hu}
   \mathcal{H}=h(u,\t{u},p) {h}(\h u,\th{u},p)-h(u,\h{u},q) {h}(\t u,\th{u},q)=0,
\end{equation}
where function $h(u,\tilde{u},p)$ is $h^{12}$ divided by certain factor $\kappa(p,q)$.
Functions $h$, $\mathcal{H}$ and relations with  the ABS equations are listed below:
\begin{subequations}\label{decom}
\begin{align}
\text{H1}:\ \ &h(u,\t{u},p)=1, \mathcal{H}=0,\ \  \text{identity}, \\
\text{H2}:\ \ &h(u,\t{u},p)=u+\t{u}+p, \ \ \mathcal{H}= H2(u,\t{u},\h{u},\h{\t{u}};p,q)=0,\\
\text{H3}:\ \ &h(u,\t{u},p)=u\t{u}+p\delta, \ \ \mathcal{H}= H3(u,\t{u},\h{u},\h{\t{u}};p,q)=0,\\
\text{A1}:\ \ &h(u,\t{u},p)=\frac{1}{p}(\t{u}+u)^2-\delta^2p, \ \ \mathcal{H}=A1(u,\t{u},\h{u},\h{\t{u}};p,q)A1(u,\t{u},\h{u},\h{\t{u}};p,-q)=0,\\
\text{A2}:\ \ &h(u,\t{u},p)=\frac{(pu\t{u}-1)(u\t{u}-p)}{1-p^2}, \ \ \mathcal{H}=A2(u,\t{u},\h{u},\h{\t{u}};p,q)A2(u,\t{u},\h{u},\h{\t{u}};p,q^{-1})=0,\\
\text{Q1}:\ \ &h(u,\t{u},p)=\frac{1}{p}(\t{u}-u)^2-\delta^2p, \ \ \mathcal{H}=Q1(u,\t{u},\h{u},\h{\t{u}};p,q)Q1(u,\t{u},\h{u},\h{\t{u}};p,-q)=0,
\label{Q1H}\\
\text{Q2}:\ \ &h(u,\t{u},p)=\frac{1}{p}(\t{u}-u)^2-2p(u+\t{u})+p^3, \nonumber\\
& \mathcal{H}=Q2(u,\t{u},\h{u},\h{\t{u}};p,q)Q2(u,\t{u},\h{u},\h{\t{u}};p,-q)=0,\\
\text{Q3}:\ \ &h(u,\t{u},p)=\frac{p}{1-p^2}(u^2+\t{u}^2)-\frac{1+p^2}{1-p^2}u\t{u}+\frac{(1-p^2)\delta^2}{4p},\nonumber\\
& \mathcal{H}=Q3(u,\t{u},\h{u},\h{\t{u}};p,q)Q3(u,\t{u},\h{u},\h{\t{u}};p,q^{-1})=0,\\
\text{Q4}:\ \ &h(u,\t{u},p)=\frac{-1}{\mathrm{sn}(p)}\left(k^2\mathrm{sn}^2(p)u^2\t{u}^2+2\mathrm{sn}'(p)u\t{u}-u^2-\t{u}^2+\mathrm{sn}^2(p)\right),\nonumber\\
& \mathcal{H}=Q4(u,\t{u},\h{u},\h{\t{u}};p,q)Q4(u,\t{u},\h{u},\h{\t{u}};p,-q)=0.
\end{align}
\end{subequations}

We note that for the ABS equations, the case $K=0$ in Proposition \ref{ProPQ1}
corresponds to H-type equations in the ABS list;
for H-type equations $P$ is a constant and for H1 even $P=0$;
for A-type and Q-type equations, $Q$ and $P$ differ only in the  parameter $q$.

\section{B\"acklund transformations}\label{sec-3}

Motivated by the decomposition \eqref{decom} of the ABS equations, we consider the following system
\begin{subequations}\label{BTuU}
\begin{align}
& h(u,\t{u},p)=U\t{U},\\
& h(u,\h{u},q)=U\h{U},
\end{align}
\end{subequations}
where to meet the consistency w.r.t. $u$ we request $U$ satisfies certain quadrilateral equation
\begin{equation}
F(U,\t U,\h U, \th U; p,q)=0,
\label{U-eq}
\end{equation}
which we call $U$-equation for convenience.
In fact, on one hand, for arbitrary $U$ the function $h$ defined by \eqref{BTuU} satisfies  \eqref{hu}.
On the other hand, since $u$ must be well defined by \eqref{BTuU},
the two equations in \eqref{BTuU} must be compatible (i.e. $\th u=\t{\h u}$),
which leads to the $U$-equation.
In this paper by the \textit{consistent triplet} generated by \eqref{BTuU}
we mean the triplet composed by $u$-equation \eqref{Q}, $U$-equation \eqref{U-eq} and their BT \eqref{BTuU},
in which the compatibility of \eqref{BTuU} w.r.t. $u$ yields \eqref{U-eq}
and the compatibility of \eqref{BTuU} w.r.t. $U$ yields \eqref{Q} (cf.\cite{ZhaZ-arxiv-17}).
Obviously, not any BT can generate a consistent triplet.

As for generating solutions, we note that for H2 and H3, $u$  solved from \eqref{BTuU} with corresponding $h$ will provide a solution to
these two equations,
while for the rest equations in the ABS list there is uncertainty.
For example, for Q1, we do not know whether $u$ solves Q1 or $\mathrm{Q1}(u,\t{u},\h{u},\h{\t{u}};p,-q)=0$.
In this section, instead of finding solutions, we are more interested in considering \eqref{BTuU}
 as a BT to connect $u$-equation \eqref{hu} and $U$-equation.
In the following subsections Sec.\ref{sec-3-1},\ref{sec-3-2},
we start from  a generic  affine-linear polynimial
\begin{equation}
h(u,\t{u},p)=s_0(p)+s_1(p)u+s_2(p)\t{u}+s_3(p)u\t{u},
\label{h-eq}
\end{equation}
where $s_i(p)$ are functions of $p$,
and examine all possibility that admits a consistent triplet in which both  $u$-equation and $U$-equation are CAC.
We note that $h=0$ with \eqref{h-eq} is a discrete Riccati equation,
and the special case $s_1=s_2$ with $s_3=0$ was already considered in \cite{A08}.

\subsection{Consistent triplets}\label{sec-3-1}

When $h$ is defined as \eqref{h-eq}, for the relation of $h$ and possible forms of $u$-equation and  $U$-equation, we have the following.

\begin{theorem}\label{thm1}
When $h(u,\t{u},p)$ in system \eqref{BTuU} is defined by \eqref{h-eq},
then $U$-equation \eqref{U-eq} is affine-linear if and only if
either
\begin{equation}
h(u,\t{u},p)=s_0(p)+s_1(p)u+s_2(p)\t{u}
\label{h1}
\end{equation}
or (after a constant shift $u\to u-c$)
\begin{equation}
h(u,\tilde{u},p)=s_0(p)+s_3(p)u\t{u}.
\label{h2}
\end{equation}
\end{theorem}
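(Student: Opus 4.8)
The plan is to produce the $U$-equation explicitly as the compatibility condition of the overdetermined system \eqref{BTuU} for $u$, and then to decide when that condition is affine-linear. Since \eqref{h-eq} is affine-linear in each of $u,\t u$, each relation in \eqref{BTuU} is a discrete Riccati (M\"obius) equation in $u$. First I would solve the two base relations for the shifts,
\[
\t u=\frac{(U\t U-s_0(p))-s_1(p)u}{s_3(p)u+s_2(p)},\qquad \h u=\frac{(U\h U-s_0(q))-s_1(q)u}{s_3(q)u+s_2(q)},
\]
then compute $\th u$ in the two ways furnished by the $p$-shift of the second relation and the $q$-shift of the first, and equate them. Clearing denominators yields a single relation $E(u;U,\t U,\h U,\th U)=0$ that is linear in $\th U$ and of degree at most two in $u$; its coefficients $E_0,E_1,E_2$ of $u^0,u^1,u^2$ are polynomials in the $U$-variables, and the $U$-equation $F=0$ is exactly the relation among the $U$'s for which this compatibility holds.

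The structural fact I would establish next is a divisibility. Because \eqref{BTuU} is a genuine BT, for any $U$-field solving $F=0$ the propagation of $u$ is consistent for an arbitrary seed, so $F=0$ forces $E\equiv0$ as a polynomial in $u$. Taking $F$ squarefree, the Nullstellensatz gives $F\mid E$, i.e.\ $E=\Lambda(u)\,F$ with $\Lambda$ a polynomial in $u$ and the $U$-variables, whence each coefficient satisfies $E_k=\Lambda_k F$. This reduces the whole problem to comparing the monomial supports of the $E_k$.

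For the ``if'' direction I would verify \eqref{h1} and \eqref{h2} by hand. When $s_3=0$ the two $u$-linear contributions share the common weight $s_1(p)s_1(q)u$ and cancel, so $E$ is already $u$-free and multilinear. When $s_1=s_2=0$ (the form \eqref{h2}) the relations give $u\t u=(U\t U-s_0(p))/s_3(p)$ and $u\h u=(U\h U-s_0(q))/s_3(q)$; multiplying the two cross-closings eliminates $u$ multiplicatively and produces $s_3(q)^2(U\t U-s_0(p))(\h U\th U-s_0(p))=s_3(p)^2(U\h U-s_0(q))(\t U\th U-s_0(q))$, which is affine-linear. For the ``only if'' direction I would exploit the rigidity of $E_2$: a direct computation shows $E_2$ involves only $\h U\th U$, $\t U\th U$ and a constant, whereas $E_0$ additionally contains the genuine square $U^2\h U\t U$ with coefficient $s_1(q)s_3(p)-s_1(p)s_3(q)$, plus mixed terms $\h U\th U\,U\t U$ and $\t U\th U\,U\h U$. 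If $E_2\equiv0$ then $s_2(q)s_3(p)-s_3(q)s_1(p)=0$ and $s_2(p)s_3(q)-s_3(p)s_1(q)=0$, giving $s_1=s_2=c\,s_3$ at once. Otherwise $F\propto E_2$, and $F\mid E_0$ first forces the square $U^2\h U\t U$ to vanish, giving $s_1=c\,s_3$; after the normalising shift $u\to u-c$ (so $s_1=0$) the matching of the remaining cross-terms of $E_0$ forces $\Lambda_0$ to carry $U\t U$ and $U\h U$ pieces, which on multiplication by $F$ regenerate forbidden squares $\t U^2,\h U^2$ unless $s_2=0$. Either branch yields $s_1=s_2=c\,s_3$, i.e.\ \eqref{h2} after the shift, while $s_3=0$ is \eqref{h1}.

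I expect the ``only if'' direction to be the main obstacle, for two reasons. One is justifying the divisibility $F\mid E$ cleanly: one must argue that an affine-linear $U$-equation, if it exists, annihilates the compatibility polynomial for \emph{arbitrary} $u$ rather than merely on a subvariety. The other is the monomial-support bookkeeping deciding which terms of $E_0$ can be absorbed into $\Lambda_0$ and which cannot; the subtlety is that cross-terms such as $\h U\th U\,U\t U$ are harmless (absorbed by a $U\t U$ summand of $\Lambda_0$), whereas the genuine squares $U^2,\t U^2,\h U^2$ that they generate are not. It is precisely the forced appearance of these squares that pins $s_1$ and $s_2$ to a common constant multiple of $s_3$ and thereby singles out \eqref{h1} and \eqref{h2}.
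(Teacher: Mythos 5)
Your proposal follows essentially the same route as the paper's proof: solve the two Riccati relations for $\t u$ and $\h u$, impose $\h{\t{u}}=\t{\h{u}}$, clear denominators, and kill the $u$-dependence of the resulting polynomial; the dichotomy $s_3=0$ versus $s_1=s_2=cs_3$ (hence \eqref{h1} or, after the shift, \eqref{h2}) is reached in exactly the same way, through the vanishing of $s_2(q)s_3(p)-s_3(q)s_1(p)$ and $s_2(p)s_3(q)-s_3(p)s_1(q)$, and your explicit verifications in the ``if'' direction reproduce the paper's $U$-equations. The one place you go beyond the paper is the ``only if'' direction: the paper simply demands that the compatibility relation be identically independent of $u$ and sets the coefficient of $u$ to zero as an identity in the $s_i$, whereas you also entertain the possibility that the coefficients $E_1,E_2$ vanish only modulo the $U$-equation $F$, and exclude it via the divisibility $F\mid E_k$ plus the monomial-support comparison (the $U^2\t U\h U$ term of $E_0$ cannot arise from $\Lambda_0 F$ when $F\propto E_2$ is supported on $\h U\th U$, $\t U\th U$ and a constant). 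That extra branch turns out to be vacuous, so both arguments land on the same two canonical forms; your version is the more scrupulous reading of what ``the $U$-equation is affine-linear'' permits, at the cost of the bookkeeping you yourself identify as the delicate step, which is sketched but checks out.
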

\begin{proof}
When $h(u,\t{u},p)$  is defined by \eqref{h-eq},
we  solve out from \eqref{BTuU} that
\begin{equation*}
   \t{u}=\frac{U\t{U}-s_0(p)-s_1(p)u}{s_2(p)+s_3(p)u},\ \ \h{u}=\frac{U\h{U}-s_0(q)-s_1(q)u}{s_2(q)+s_3(q)u}.
\end{equation*}
From  the consistency $\h{\t{u}}=\t{\h{u}}$, we have the resulting equation for $U$: $f(U,\t{U},\h{U}, \h{\t{U}},u)=0$,
which should be  independent of $u$. So
the coefficient of $u$ in $f$ should be zero, which leads to
\begin{equation}
    s_2(q)s_3(p)-s_3(q)s_1(p)=s_2(p)s_3(q)-s_3(p)s_1(q)=0.
    \label{s-eq}
\end{equation}
When $s_3=0$, \eqref{h-eq} turns to be \eqref{h1}.
In this case, the system \eqref{BTuU} is a BT for $u$-equation
\begin{equation}\label{u1}
    \frac{s_0(p)+s_1(p)u+s_2(p)\t{u}}{ s_0(q)+s_1(q)u+s_2(q)\h{u}}=\frac{s_0(q)+s_1(q)\t{u}+s_2(q)\h{\t{u}}}{s_0(p)+s_1(p)\h{u}+s_2(p)\h{\t{u}}}
\end{equation}
and $U$-equation
\begin{equation}\label{U1}
  U(s_1(q)\t{U}-s_1(p)\h{U})+\h{\t{U}}(s_2(q)\h{U}-s_2(p)\t{U})+s_0(q)(s_1(p)+s_2(p))-s_0(p)(s_1(q)+s_2(q))=0.
\end{equation}
Here we note that equation \eqref{u1}, \eqref{U1} and the BT \eqref{BTuU}
compose a consistent triplet (cf.\cite{ZhaZ-arxiv-17}),
i.e. viewing the BT \eqref{BTuU} as a two-component system, then the compatibility of each component yields a
lattice equation of the other component which is in the triplet.

When function $s_3\neq 0$, we have
\begin{equation*}
   c=\frac{ s_2(q)}{s_3(q)}=\frac{ s_1(p)}{s_3(p)}=\frac{ s_1(q)}{s_3(q)}=\frac{ s_2(p)}{s_3(p)},
\end{equation*}
where $c$ is a constant independent of $p$ and $q$. It then follows that $s_2=s_1=cs_3$.
Thus \eqref{h-eq} yields 
$$h(u,\t{u},p)=s_0(p)-c^2s_3(p)+s_3(p)(u+c)(\t{u}+c),$$
which then reduces to \eqref{h2} by redefine $s_0(p) \to s_0(p)+c^2s_3(p)$ and $u\to u-c$.
Consequently the $u$-equation reads
\begin{equation}\label{u2}
    (s_0(p)+s_3(p)u\t{u}) (s_0(p)+s_3(p)\t{u}\h{\t{u}})=  (s_0(q)+s_3(q)u\h{u}) (s_0(q)+s_3(q)\t{u}\h{\t{u}})
\end{equation}
and $U$-equation reads
\begin{align}
&(s_3^2(p)-s_3^2(q))U\t{U}\h{U}\h{\t{U}}+U(s_0(p)s_3^2(q)\t{U}-s_0(q)s_3^2(p)\h{U})\nonumber\\
& +\h{\t{U}}((s_0(p)s_3^2(q)\h{U}-s_0(q)s_3^2(p)\t{U}))=s_0^2(p)s_3^2(q)-s_0^2(q)s_3^2(p).\label{U2}
\end{align}
Note that replacing  $s_0$ with $-\frac{s_0}{s_3}$ and $s_3$ with $\frac{1}{s_3}$,
equation  \eqref{u2} becomes \eqref{U2}.
Eqs.\eqref{u2}, \eqref{U2} and \eqref{BTuU} compose a consistent triplet as well.

\end{proof}

Now we have obtained four quadrilateral equations, \eqref{u1}, \eqref{U1}, \eqref{u2} and \eqref{U2},
all of which are derived as a compatibility of \eqref{BTuU}.
Among them, equation \eqref{u1} with $s_0=1, ~ s_1(p)=p-a,~ s_2(p)=p+a$ can be considered as the
Nijhoff-Quispel-Capel (NQC) equation with $b=a$ (cf. \cite{NQC-1983} and eq.(9.49) in \cite{HieJN-2016}).

\subsection{Multidimensional  consistency with  $h$  given in \eqref{h1} and \eqref{h2}}\label{sec-3-2}

Although  \eqref{u1}, \eqref{U1}, \eqref{u2} and \eqref{U2}
are derived as a compatibility of \eqref{BTuU},
it is not true that they are multidimensionally consistent for arbitrary $s_i$.
After a case-by-case investigation of the CAC property of the four equations,
we reach a full list that includes all multidimensionally  consistent equations when  $h$ are given in \eqref{h1} and \eqref{h2},
which is presented in the  following theorem:

\begin{theorem}\label{class}
For the system \eqref{BTuU} where $h$ is affine-linear as given in \eqref{h1} and \eqref{h2},
if it generates a consistent triplet and acts as a BT between quadrilateral equations which are CAC,
the exhausted results are
{\small
\begin{center}
\setlength{\tabcolsep}{12pt}
\renewcommand{\arraystretch}{2.0}
\begin{tabular}{llll}
\hline
   No.& \qquad   BT\eqref{BTuU}      &        $u$-equation  &   $U$-equation \\
\hline
1&\!\!\!\!$\frac{1}{p}(u-\t{u})=U\t{U}$&Q1($0;p^2,q^2$)& lpmKdV\\
2&\!\!\!\!$u+\t{u}+p=U\t{U}$& H2 & H1($2p,2q$)\\
3&\!\!\!\!$\frac{1}{p}(u+\t{u}-\delta p)=U\t{U}$& A1($\delta;p^2,q^2$) & H3($\delta;2p,2q$)\\
4&\!\!\!\!$u\t{u}+\delta p=U\t{U}$& H3($\delta$) & H3($-\delta$)\\
5&\!\!\!\!$\frac{1}{p}(u\t{u}-1)=U\t{U}$& \eqref{new} &  H3($1$) with $U\to U^{-1}$ \\
6&\!\!\!\!$\frac{1}{\sqrt{1-p^2}}(1-pu\t{u})=U\t{U}$  & A2 &  A2($\sqrt{1-p^2}, \sqrt{1-q^2}$)\\
\hline
\end{tabular}
\begin{center}{\scriptsize {\bf Table 1.}   Consistent triplets }
\end{center}
\end{center}
}
\end{theorem}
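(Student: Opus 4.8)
The plan is to start from the four candidate lattice equations already produced in Theorem~\ref{thm1} --- the $u$-equations \eqref{u1}, \eqref{u2} and the $U$-equations \eqref{U1}, \eqref{U2} --- and to impose consistency-around-the-cube on them as a constraint on the coefficient functions $s_i$. Since each of these equations is affine-linear in its four vertices, the CAC test can be carried out directly in the spirit of \eqref{BTCAC}: extend the equation to the cube with a third parameter $r$, compute $\th{\b u}$ (and the corresponding quantity for $U$) along the three admissible routes, and demand that the results coincide identically in the initial data $u,\t u,\h u,\b u$. Because the dependence on $p,q,r$ enters only through the $s_i$, the vanishing of the coefficient of each monomial yields identities of the form ``expression in $p$ $=$ expression in $q$'', which by the standard separation argument must individually equal a constant. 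Solving these functional equations pins down the admissible $s_i(p)$ up to the residual freedom of rescaling $U$, applying point transformations to $u$ and $U$ compatible with \eqref{BTuU}, and reparametrizing the lattice parameter.

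First I would treat case \eqref{h1}, where $s_3\equiv 0$, so that the relevant equations are the linear-fractional pair \eqref{u1}, \eqref{U1}. Here the CAC analysis forces the ratios $s_1(p)/s_2(p)$ and $s_0(p)/s_2(p)$ into a prescribed $p$-dependence; exploiting the residual freedom I would normalize each surviving family and match the $u$- and $U$-equations against the ABS list \eqref{ABS} and the decomposition \eqref{decom}. This should recover rows~1--3 of Table~1: for instance $s_0(p)=p,\ s_1=s_2=1$ reproduces the $h$ of H2 and yields the pair (H2, H1), while $s_0=0,\ s_1=-s_2$ gives (Q1($0$), lpmKdV), and the affine case $s_1=s_2$ with $s_0=-\delta$ gives (A1, H3).

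Next I would treat case \eqref{h2}, where $s_1=s_2\equiv 0$ and $h=s_0(p)+s_3(p)u\t u$, so that the relevant equations are \eqref{u2}, \eqref{U2}. The same CAC test now constrains the single ratio $s_0(p)/s_3(p)$, and after rescaling and reparametrization the admissible choices should collapse to rows~4--6. The self-dual structure noted after \eqref{U2} --- that \eqref{u2} and \eqref{U2} are interchanged by $s_0\mapsto -s_0/s_3,\ s_3\mapsto 1/s_3$ --- is convenient here, since it pairs each $u$-equation with its $U$-equation automatically and roughly halves the analysis, producing the entries H3($\delta$)/H3($-\delta$), A2/A2, and the new equation of row~5 as the inequivalent normal forms.

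The main obstacle will be exhaustiveness rather than any single computation. One must verify that the enumerated branches, together with all degenerate sub-cases --- various $s_i$ vanishing, or the discriminant of $h$ degenerating so that the $u$- or $U$-equation falls outside the CAC class --- account for \emph{every} consistent triplet with both sides CAC, and that no two surviving families are related by an admissible transformation. The delicate points are keeping track of when a formal solution of the separated functional equations actually violates the tetrahedron property and must be discarded, and recognizing each surviving affine-linear equation in its ABS normal form after a possibly non-obvious point change and reparametrization of $p$; these recognition and exclusion steps are where the care is concentrated.
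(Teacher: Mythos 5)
Your proposal follows essentially the same route as the paper's Appendix A: impose the CAC condition $\b{\h{\t{u}}}=\t{\h{\b{u}}}$ on the candidate equations from Theorem \ref{thm1}, extract coefficient identities in the $s_i$, separate the $p$- and $q$-dependence into constants, solve the resulting functional equations case by case for \eqref{h1} and \eqref{h2}, and normalize the surviving families (including via the duality of \eqref{u2} and \eqref{U2}) to match the rows of Table 1. The paper executes this by tracking specific monomial coefficients (e.g.\ of $u\b{u}^3$ and $\t{u}\h{u}^3$) and a branching into subcases, which is exactly the exhaustiveness bookkeeping you flag as the main burden.
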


Proof of the theorem and equation \eqref{new} are given in Appendix \ref{sec-A-1}

\subsection{Other cases: Q1$(\delta)$, A1$(\delta)$ and A2}

For equations Q1$(\delta)$, A1$(\delta)$ and A2, their $h$ polynomials are not affine linear.
We discuss them one by one.

First, for Q1$(\delta)$, the corresponding system \eqref{BTuU} is
\begin{equation}\label{BTQ1}
(\t{u}-u)^2-\delta^2p^2=pU\t{U},~~
(\h{u}-u)^2-\delta^2q^2=qU\h{U},
\end{equation}
which is quadratic w.r.t. $u$.
We find if
$u$ satisfies Q1$(\delta)$, then $U$ satisfies
\begin{equation}\label{Q1U}
\left[p(U\t{U}-\h{U}\h{\t{U}})-q(U\h{U}-\t{U}\h{\t{U}})\right]^2
+4pq(\t{U}-\h{U})\left[\delta^2(p-q)(U-\h{\t{U}})-U\h{\t{U}}(\t{U}-\h{U})\right]=0,
\end{equation}
and vice versa.
The above equation  can be transformed to H3*$(\delta)$ equation
\begin{equation}\label{H3*}
  (p-q)\left[p(U\h{U}-\t{U}\h{\t{U}})^2-q(U\t{U}-\h{U}\h{\t{U}})^2\right]
+(U-\h{\t{U}})(\t{U}-\h{U})\left[(U-\h{\t{U}})(\t{U}-\h{U})pq-4\delta^2(p-q)\right]=0,
\end{equation}
by transformation $ p\to 1/p, q\to 1/q$.
Here we note that H3*$(\delta)$ is one of integrable quad equations that are multi-quadratic  counterparts
of the ABS equations.
These multi-quadratic equations are consistent in multi-dimensions as well and were systematically found in the work  \cite{AtkN-IMRN-2014}.
\eqref{BTQ1} provides a BT between Q1$(\delta)$ and H3*$(\delta)$ \eqref{Q1U}.

Similarly, for A1, we can find
\begin{equation}\label{BTA1}
(\t{u}+u)^2-\delta^2p^2=pU\t{U},~~
(\h{u}+u)^2-\delta^2q^2=qU\h{U},
\end{equation}
provides a BT between A1 and H3*$(\delta)$ \eqref{Q1U}.

For A2, in the system \eqref{BTuU} there is
\begin{equation}
h(u,\t{u},p)=\frac{(pu\t{u}-1)(u\t{u}-p)}{1-p^2}.
\label{h-A2}
\end{equation}
It is hard to write out a $U$-equation in a neat form.
However, observing that in \eqref{BTuU} $U$ is arbitrary,
we can replace $U$ with $U/f(u)$ where $f(u)$ is a suitable function of $u$
so that the deformed BT
\begin{equation}
h(u,\t{u},p)f(u)f(\t{u})=U\t U,~~h(u,\h{u},q)f(u)f(\h{u})=U\h U
\end{equation}
yields a $U$-equation with a neat form.
Taking $f(u)=1/u$, \eqref{BTuU} with \eqref{h-A2} becomes
\begin{equation}
\frac{(pu\t{u}-1)(u\t{u}-p)}{(1-p^2)u\t{u}}=U\t{U},~~\frac{(qu\h{u}-1)(u\h{u}-q)}{(1-q^2)u\h{u}}=U\t{U},
\label{bt-1}
\end{equation}
connects the solutions between A2($u$)   and A2*($U$) equation \cite{AtkN-IMRN-2014}
\begin{align}\label{H3*}
& (p-q)\left[p(U\h{U}-\t{U}\h{\t{U}})^2-q(U\t{U}-\h{U}\h{\t{U}})^2\right] \nonumber\\
&~~~~+(U-\h{\t{U}})(\t{U}-\h{U}) \left[(U-\h{\t{U}})(\t{U}-\h{U})(pq-1)+2(p-q)(1+U\t{U}\h{U}\h{\t{U}})\right]=0,
\end{align}
with  $p\to 2(p^2+1)/(1-p^2) , q\to 2(q^2+1)/(1-q^2)$.
One more example of utilizing $f(u)$ is H3($\delta$). For the BT of No.4 in Table 1,
taking $f(u)=u$ we have
\begin{equation}
(u\t{u}+\delta p)u\t{u}=U\t{U},~~(u\h{u}+\delta q)u\h{u}=U\h{U},
\label{bt-2}
\end{equation}
connects the solutions between H3$(\delta)$($u$) and H3*$(\delta)$($U$) equation with parameters $p\to 4/p^2, q\to 4/q^2$.
BTs \eqref{bt-1} and \eqref{bt-2} have been found in \cite{AtkN-IMRN-2014}.
Let us look at a third example which is not known before.
It can be verified that
\begin{equation}
\frac{u\t{u}-p}{pu\t{u}-1}=U\t{U},~~ \frac{u\t{u}-q}{qu\t{u}-1}=U\h{U}
\label{bt-a2}
\end{equation}
is an auto-BT of A2. Taking  {$f(u)=1/u$ } the new BT provides a transformation
between A2$(u)$  and  A2*($U$)  with parameters $p\to 2p^2-1, q\to 2q^2-1$.

We collect the BTs of this subsection in Table 2.

{\small\begin{center}
\begin{center}
\setlength{\tabcolsep}{12pt}
\renewcommand{\arraystretch}{2.0}
\begin{tabular}{llll}
\hline
   No.& \qquad   BT\eqref{BTuU}      &        $u$-equation  &   $U$-equation \\
\hline
1&$\frac{1}{p}(\t{u}-u)^2-\delta^2p=U\t{U}$&Q1($\delta$)& H3*($\delta;\frac{1}{p},\frac{1}{q}$)\\
2&$\frac{1}{p}(\t{u}+u)^2-\delta^2p=U\t{U}$&A1($\delta$) & H3*($\delta;\frac{1}{p},\frac{1}{q}$)\\
3&$\frac{(pu\t{u}-1)(u\t{u}-p)}{(1-p^2)u\t{u}}=U\t{U}$& A2 & A2*$\left(\frac{2(p^2+1)}{1-p^2},\frac{2(q^2+1)}{1-q^2}\right)$\\
4&$(u\t{u}+\delta p)u\t{u}=U\t{U}$& H3($\delta$) & H3*$(\delta;\frac{4}{p^2},\frac{4}{q^2})$\\
5&$\frac{u\t{u}-p}{pu\t{u}-1}=U\t{U}$& A2 & A2\\
6&$\frac{u\t{u}-p}{(pu\t{u}-1)u\t{u}}=U\t{U}$& A2 &  A2*$(2p^2-1,2q^2-1)$\\
\hline
\end{tabular}
\begin{center}{\scriptsize {\bf Table 2.}   BT\eqref{BTuU}  related to Q1$(\delta)$, A1$(\delta)$, H3($\delta$) and A2.}
\end{center}
\end{center}
\end{center}
}

In this section we have given an exhausted examination for the case where  $h$ is the affine-linear polynomial \eqref{h-eq}. 
For Q1($\delta$), A1($\delta$) and A2,  their $h$ polynomials are not affine linear
and their corresponding $U$-equations are usually multi-quadratic  counterparts of the ABS equations, (see Table 2).
For Q2, Q3 and Q4, their $h$ polynomials are so complicated that from system  \eqref{BTuU} we can not derive  explicit $U$-equations.

\section{Applications}\label{sec-4}

Consistent triplets have been used as a main tool to find rational solutions for quadrilateral equations (see \cite{ZhaZ-arxiv-17}).
In this section we would like to introduce more applications, which are BT and weak Lax pair of Q1(0),
polynomial solutions of Q1($\delta$) and rational solutions of H3$^*(\delta)$.

\subsection{BT and weak Lax pair of Q1(0)}\label{sec-4-1}

From the previous discussion, we know that Q1(0),
\begin{equation}\label{Q10}
    p^2(u-\h{u})(\t{u}-\h{\t{u}})-q^2(u-\t{u})(\h{u}-\h{\t{u}})=0,
\end{equation}
has a  BT
\begin{equation}
\t{u}-u=pU\t{U},~~ \h{u}-u=qU\h{U},
\label{bt-q1-1}
\end{equation}
where $U$ satisfies the lpmKdV equation.
If $U$ solves the lpmKdV equation, so does $1/U$.
Employing this symmetry we introduce
\begin{equation}
    \t{\b{u}}-\b{u}= pU^{-1}\t{U}^{-1},\quad  \h{\b{u}}-\b{u}=  qU^{-1}\h{U}^{-1}
\label{bt-q1-2}
\end{equation}
as an adjoint system of \eqref{bt-q1-1}, which is also a BT between Q1(0) and lpmKdV.
Eliminating $U$ we reach
\begin{equation*}
    (\t{\b{u}}-\b{u})(\t{u}-u)= p^2,\quad   (\h{\b{u}}-\b{u})(\h{u}-u)=q^2,
\end{equation*}
which is an auto-BT of Q1(0).
Noticing the symmetry that $u$ and $1/u$ can solve Q1(0) simultaneously,
we replace $\bar{u}$ with ${1}/{\bar{u}}$ and reach
\begin{equation}\label{auto-BTQ10}
    (\t{\b{u}}-\b{u})(\t{u}-u)+ p^2\b{u}\,\t{\b{u}}=0,\quad
    (\h{\b{u}}-\b{u})(\h{u}-u)+ q^2\b{u}\,\h{\b{u}}=0,
\end{equation}
which is another auto-BT of Q1(0).
One can check that the following 6 equations
\begin{subequations}\label{Q10cube}
\begin{align}
& p^2(u-\h{u})(\t{u}-\h{\t{u}})-q^2(u-\t{u})(\h{u}-\h{\t{u}})=0,\label{Q10cube-1}\\
& (\t{\b{u}}-\b{u})(\t{u}-u)+ p^2\b{u}\,\t{\b{u}}=0,\label{Q10cube-2}\\
& (\h{\b{u}}-\b{u})(\h{u}-u)+ q^2\b{u}\,\h{\b{u}}=0,\label{Q10cube-3}\\
& (\h{\t{\b{u}}}-\h{\b{u}})(\h{\t{u}}-\h{u})+ p^2\h{\b{u}}\,\h{\t{\b{u}}}=0,\label{Q10cube-4}\\
& (\h{\t{\b{u}}}-\t{\b{u}})(\h{\t{u}}-\t{u})+ q^2\t{\b{u}}\,\h{\t{\b{u}}}=0,\label{Q10cube-5}\\
& p^2(\b{u}-\h{\b{u}})(\t{\b{u}}-\h{\t{\b{u}}})-q^2(\b{u}-\t{\b{u}})(\h{\b{u}}-\h{\t{\b{u}}})=0 \label{Q10cube-6}
\end{align}
\end{subequations}
can be consistently embedded on 6 faces of a cube.

The BT \eqref{bt-q1-2} yields a pair of linear problems (Lax pair):
\begin{equation}\label{cube-laxpair}
 \t{\Phi}=\left(\begin{array}{cc}
   1& 0\\
 \frac{p^2}{\t{u}-u}& 1\\
   \end{array}\right)\Phi,~~
 \h{\Phi}=\left(\begin{array}{cc}
   1& 0\\
 \frac{q^2}{\h{u}-u}& 1\\
   \end{array}\right)\Phi,
\end{equation}
where $\Phi=(g, f)^T$. The consistency of \eqref{cube-laxpair}
leads to an equation
\begin{equation}\label{condition}
    (u-\t{u}-\h{u}+\h{\t{u}})\left[p^2(u-\h{u})(\t{u}-\h{\t{u}})-q^2(u-\t{u})(\h{u}-\h{\t{u}})\right]=0,
\end{equation}
which is Q1(0) multiplied by a factor $(u-\t{u}-\h{u}+\h{\t{u}})$.
This means Q1(0) can not be fully determined by \eqref{cube-laxpair}.
Such an Lax pair is called a weak Lax pair and was first systematically studied in \cite{weak-lax}.
\eqref{cube-laxpair} is a new weak Lax pair of Q1(0).
As a result,
 replacing \eqref{Q10cube-1} and \eqref{Q10cube-6} by
\[u-\t{u}-\h{u}+\h{\t{u}}=0,~~ \b{u}\,\t{\b{u}}\,\h{\b{u}}\,\h{\t{\b{u}}}(1/\b{u}-1/\t{\b{u}}-1/\h{\b{u}}+1/\h{\t{\b{u}}})=0,\]
respectively,
system \eqref{Q10cube} is also  a consistent cube.

In addition to the weak Lax pair of Q1(0), we  have shown an approach
to construct auto-BT for $u$-equation from \eqref{BTuU} if
$U$-equation admits a symmetry $U\to 1/U$.
For A2 and related \eqref{bt-a2}, employing the same technique, we have relations
\[
(u\t{u}-p)(\b{u}\,\t{\b{u}}-p)=(pu\t{u}-1)(p\b{u}\,\t{\b{u}}-1),~~ (u\h{u}-q)(\b{u}\,\h{\b{u}}-q)=(qu\h{u}-1)(q\b{u}\,\h{\b{u}}-1),\]
and
\[
(u\t{u}-p)(1-p\b{u}\,\t{\b{u}})=(pu\t{u}-1)(p-\b{u}\,\t{\b{u}}),~~ (u\h{u}-q)(1-q\b{u}\,\h{\b{u}})=(qu\h{u}-1)(q-\b{u}\,\h{\b{u}}).\]
Both of them are auto BTs of A2.

\subsection{Polynomial solutions of Q1($\delta$)}

Consider \eqref{BTQ1}, i.e.
\begin{equation}\label{BTQ1-1}
(\t{u}-u)^2-\delta^2p^2=pU\t{U},~~
(\h{u}-u)^2-\delta^2q^2=qU\h{U},
\end{equation}
which is a BT between Q1($\delta$) and H3*($\delta$).
However, if we do not care about what $U$-equation is,
then from decomposition \eqref{Q1H} any $u$ defined by \eqref{BTQ1-1} will be a solution of
\begin{equation}
\mathcal{H}=Q1(u,\t{u},\h{u},\h{\t{u}};p,q)Q1(u,\t{u},\h{u},\h{\t{u}};p,-q)=0.
\label{Q1HH}
\end{equation}
In other words,
\eqref{BTQ1-1} may also be a BT between $Q1(u,\t{u},\h{u},\h{\t{u}};p,-q)=0$ and some $U$-equation other than H3*($\delta$).
This means, if we just solve \eqref{BTQ1-1} and obtain $u$, we should verify
whether $u$ satisfies Q1($\delta$) \eqref{Q1} or $Q1(u,\t{u},\h{u},\h{\t{u}};p,-q)=0$.

To solve \eqref{BTQ1-1} which is a quadratic system , we suppose that $U$ is a polynomial of
\begin{equation}
x=an+bm+\gamma,
\label{x}
\end{equation}
say,
\begin{equation}
U=\displaystyle{\sum_{i=0}^N}c_{N-i}x^i
\label{U}
\end{equation}
with constant $a, b, \gamma, c_i$ and $c_0\neq 0, N\geq0$.
Introduce
\begin{equation}
v_1=\t{u}-u,~~ v_2=\h{u}-u,
\label{vu}
\end{equation}
where  $v_1, v_2$ should satisfy
\begin{equation}\label{v12}
    \h{v}_1-v_1=\t{v}_2-v_2
\end{equation}
due to consistency of \eqref{vu}.
For the case both $v_1$ and $v_2$ are also polynomials of $x$, we have the following result:

\begin{theorem}\label{t-q1}
When $U$ is given in \eqref{U}, we can convert the system \eqref{BTQ1-1} to
\begin{equation}\label{BTQ1-11}
v_1^2-\delta^2p^2=pU\t{U},~~
v_2^2-\delta^2q^2=qU\h{U}.
\end{equation}
When $N\geq 1$ and we require $v_i$ have the following form,
\begin{equation}
v_1=\displaystyle{\sum_{i=0}^N}f_{N-i}x^i,
~v_2= \displaystyle{\sum_{i=0}^N}g_{N-i}x^i
\label{Uvv}
\end{equation}
with constants $f_i, g_i$ to be determined.
Then, the only allowed values for  $N$ are  1 and 2.
$u$ is recovered through \eqref{vu}.
\end{theorem}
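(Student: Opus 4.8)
The plan is to exploit that, because $U,v_1,v_2$ are polynomials in the single variable $x=an+bm+\gamma$ of \eqref{x}, every lattice shift acts as a translation: $\widetilde{U}(x)=U(x+a)$, $\widehat{U}(x)=U(x+b)$, and likewise for $v_1,v_2$. The conversion of \eqref{BTQ1-1} to \eqref{BTQ1-11} is then immediate from the definitions \eqref{vu}, and \eqref{BTQ1-11} turns into the two polynomial identities
\[
v_1(x)^2=p\,U(x)U(x+a)+\delta^2p^2,\qquad v_2(x)^2=q\,U(x)U(x+b)+\delta^2q^2 .
\]
First I would match leading coefficients. Comparing $x^{2N}$ gives $f_0^2=pc_0^2$ and $g_0^2=qc_0^2$, so $f_0,g_0\neq0$ and $\deg v_1=\deg v_2=N$; matching $x^{2N-1},x^{2N-2},\dots$ then expresses $f_1,\dots,f_N$ through $U,a$ and $g_1,\dots,g_N$ through $U,b$, and these top $N+1$ coefficients impose no constraint. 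Reading the compatibility \eqref{v12} as $v_1(x+b)-v_1(x)=v_2(x+a)-v_2(x)$, its leading ($x^{N-1}$) coefficient gives $f_0b=g_0a$, hence $a=\lambda\sqrt{p}$, $b=\lambda\sqrt{q}$ for one constant $\lambda$; a short computation shows the next coefficient is then automatic.

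The real restriction sits in the lower coefficients. Writing $v_1=\sqrt{p}\,\omega_1$, $v_2=\sqrt{q}\,\omega_2$ absorbs the parameters, so the two identities say precisely that $U(x)U(x+\lambda t)+\delta^2t^2$ is a perfect square of a degree-$N$ polynomial at $t=\sqrt{p}$ and $t=\sqrt{q}$; regarding $p,q$ as free while $U,\lambda$ stay fixed, this should hold for a range of $t$ and hence identically in $t$. I would then expand $\omega(x,t)^2=U(x)U(x+\lambda t)+\delta^2t^2$ with $\omega(x,0)=U$: order $t$ gives $\omega_1=\tfrac{\lambda}{2}U'$, and order $t^2$ yields the divisibility
\[
U(x)\ \big|\ \tfrac{\lambda^2}{4}\,U'(x)^2-\delta^2 .
\]
For $N=1$ this is just $\tfrac{\lambda^2}{4}c_0^2=\delta^2$; for $N=2$ the expansion terminates (one checks $\omega_{\ge 2}=0$) and leaves the single relation $\delta^2=\lambda^2(\tfrac{c_1^2}{4}-c_0c_2)$, so both $N=1$ and $N=2$ are genuinely realized.

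For $N\ge 3$ I would derive a contradiction. Now $U$ (degree $N$) divides a degree-$(2N-2)$ polynomial, so the divisibility carries honest constraints on the coefficients of $U$; for $N=3$ it already forces $c_1^2=3c_0c_2$ and $c_1c_2=9c_0c_3$, i.e.\ $U=c_0(x+s)^3$. Translating $s$ to $0$, $U=c_0x^N$ gives $U(x)U(x+\lambda t)+\delta^2t^2=c_0^2\big[x(x+\lambda t)\big]^N+\delta^2t^2$, whose coefficients of $x^{N-1},\dots,x^1$ all vanish; but being a perfect square forces them to be present (for $N=3$ the coefficient of $x^2$ in $\omega^2$ comes out $-\tfrac{3}{64}c_0^2\lambda^4t^4\neq0$). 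This rules out $N\ge 3$, leaving $N\in\{1,2\}$, whereupon $u$ is recovered from \eqref{vu}.

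The hard part will be the passage ``perfect square at the given $p,q$'' $\Rightarrow$ ``perfect square identically in $t$'': I would need either to justify treating $p,q$ as free parameters with $U,\lambda$ fixed, or---keeping $p,q$ fixed---to extract the same pure-power conclusion from the two shifts $a,b$ together with \eqref{v12} alone. The second, related difficulty is carrying the pure-power reduction together with the final non-vanishing-coefficient contradiction for all $N\ge 3$, rather than only for the illustrative case $N=3$ computed above.
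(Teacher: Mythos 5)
Your setup (shifts act as translations $x\mapsto x+a$, $x\mapsto x+b$; leading coefficients give $f_0^2=pc_0^2$, $g_0^2=qc_0^2$; the top coefficient of \eqref{v12} gives $f_0b=g_0a$ and hence $p\propto a^2$, $q\propto b^2$) agrees with the paper up to that point. But the core of your argument for excluding $N\ge3$ rests on a step that is genuinely invalid as stated: you want $U(x)U(x+\lambda t)+\delta^2t^2$ to be a perfect square \emph{identically in} $t$, whereas the hypotheses only give this at the two values $t=\sqrt{p}$ and $t=\sqrt{q}$. The perfect-square condition is equivalent to the vanishing of $N$ polynomial expressions in $t$ (the remainder after extracting the polynomial square root), and two prescribed roots do not force those polynomials to vanish identically. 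Since the $t$-expansion, the divisibility $U\mid \tfrac{\lambda^2}{4}U'^2-\delta^2$, and the pure-power reduction all hang on that identity, the chain collapses without it. You flag this yourself, correctly, as the hard part; it is not a technicality but the missing ingredient. A second, lesser gap is that even granting the identity in $t$, your contradiction is only computed for $N=3$, not for all $N\ge3$.

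The information you are not using is precisely what closes the argument in the paper: the lower-order coefficients of the compatibility condition \eqref{v12}, which couples the two square roots $v_1$ (at shift $a$) and $v_2$ (at shift $b$) and thereby substitutes for the ``free $t$'' you would need. The paper determines $f_0,f_1,f_2$ and $g_0,g_1,g_2$ from the $x^{2N}$, $x^{2N-1}$, $x^{2N-2}$ coefficients of \eqref{BTQ1-11} (checking that the $x^{N-2}$ coefficient of \eqref{v12} is automatically satisfied, as you also observed for the first subleading order), and then substitutes into the $x^{N-3}$ coefficient of \eqref{v12}, which yields the single clean obstruction $abc_0(a^2-b^2)(N^2+2N)=0$; this is unsatisfiable for $N\ge3$ with $ab\neq0$, $c_0\neq0$, $a^2\neq b^2$, and disposes of all $N\ge3$ at once. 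To repair your proof, replace the ``identically in $t$'' passage by this computation: push the coefficient matching of \eqref{v12} one order further down.
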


The proof for this theorem is given in Appendix \ref{sec-B}.

Let us turn to find polynimial solutions.
When $N$=0, we have $U=c_0$ and
\begin{equation*}
    (\t{u}-u)^2=p(\delta^2p+c_0^2), \ \ (\h{u}-u)^2=q(\delta^2q+c_0^2).
\end{equation*}
Suppose
\begin{equation}
p=\frac{c_0^2}{a^2-\delta^2},~~ q=\frac{c_0^2}{b^2-\delta^2},~~~\alpha=pa,~ \beta=qb.
\end{equation}
It turns out that four possibilities for $u$ are
\begin{align*}
 &\alpha n+\beta m+\gamma, \quad \frac{(-1)^{n+1}}{2}\alpha+\beta m+\gamma,\\
 &\alpha n+\frac{(-1)^{m+1}}{2}\beta+\gamma, \ \ \frac{(-1)^{n+1}}{2}\alpha+\frac{(-1)^{m+1}}{2}\beta+\gamma,
\end{align*}
which coincide with the result in \cite{HZ09}.

When $N=1,2$, with $p,q$ parameterized as in \eqref{pq},  following Theorem \ref{t-q1}, after some calculation and scaling, 
we find solutions to \eqref{BTQ1-1}:
\begin{subequations}\label{H3*-U-sol}
\begin{align}
& u=\pm\delta x^2+\gamma_0, ~~~~ U=\pm2\delta x,\label{H3*-U-sol-1}\\
& u=\frac{c_0}{3}x^3-\frac{\delta^2}{c_0}x-\frac{c_0}{3}(a^3n+b^3m)+\gamma_0, ~~~~ U=c_0x^2-\frac{\delta^2}{c_0}.\label{H3*-U-sol-2}
\end{align}
\end{subequations}
We can check that $u$ and  $U$ respectively
satisfy  Q1($\delta$) and H3*($\delta$) equation \eqref{Q1U}.
These are polynomial solutions.

\subsection{Rational solutions of H3*($\delta$)}\label{sec-4-3}

One can derive rational solutions for H3*($\delta$) from those of Q1($\delta$) and BT  \eqref{BTQ1}.

It has been proved that Q1($\delta$) with $p,q$ parameterized as in \eqref{pq} has the following rational solutions \cite{ZhaZ-arxiv-17}:
\begin{equation}\label{Q1-series-sol}
    u_{N+2}=\frac{~\b{\b f} + \delta^2 \ub{\ub f}~}{f},
\end{equation}
where $N$-th order Casoratian $f$  is given by
\[ f\doteq  f_N =|\h{N-1}|=|\alpha(n,m,0),\alpha(n,m,1),\cdots,\alpha(n,m,N-1)|, \]
for $N\geq1$ and extended to negative  direction by
\begin{equation}
f_{-N}=(-1)^{[\frac{N}{2}]}f_{N-1},~~ f_0=1,
\label{f-neg}
\end{equation}
$[\,\cdot\,]$ denotes the greatest integer function,
$\ub f=f_{N-1}=|\h{N-2}|$, $\b f= f_{N+1}=|\h{N}|$, etc;
the Casoratian vector $\alpha$ is 
\[\alpha(n,m,l)=(\alpha_0, \alpha_1, \cdots, \alpha_{M-1})^T,~~ \alpha_j=\frac{1}{(2j+1)!}\partial^{2j+1}_{s_i}\psi_i|_{s_i=0},~~ M=1,2,\cdots\]
with function $\psi_i$
\begin{equation*}
    \psi_i(n,m,l)=  \psi_i^{+}(n,m,l) + \psi_i^{-}(n,m,l),~~
      \psi_i^{\pm}(n,m,l)=\pm \frac{1}{2}(1\pm s_i)^{l+l_0}(1\pm as_i)^n(1\pm bs_i)^m.
\end{equation*}
The Casoratian $f$ defined above satisfies a superposition relation \cite{ZhaZ-arxiv-17}
\begin{subequations}\label{f-itera}
\begin{align}
& \t{\b{\b{f}}}f-\b{\b{f}}\t{f}=a\t{\b{f}}\,\b{f},\\
& \h{\b{\b{f}}}f-\b{\b{f}}\h{f}=b\h{\b{f}}\,\b{f}.
\end{align}
\end{subequations}

Making use of \eqref{BTQ1}, \eqref{Q1-series-sol}, \eqref{f-neg} and \eqref{f-itera}, by a direct calculation 
we find rational solutions of H3*($\delta$) \eqref{Q1U} can be written as
\begin{equation}\label{H3*-series-sol}
    U_{N+2}=\frac{~\b{ f}^2 - \delta^2 \ub{f}^2~}{f^2},~~ N\in \mathbb{Z}.
\end{equation}
The first three solutions are
\begin{align*}
&U_1=1-x_1^2\delta^2,\\
&  U_2=x_1^2-\delta^2,\\
& U_3=\frac{(x_1^3-x_3)^2-9\delta^2}{9x_1^2},
\end{align*}
where
\[x_i=a^i n+b^i m+\gamma_i.\]
Here $U_2$ is  \eqref{H3*-U-sol-2} with $c_0=1$.

Finally, we note that, compared with the solution of H3($\delta$) given by \cite{ZhaZ-arxiv-17}, which is
\[
     Z_{N+2} = (-1)^{\frac{n+m}{2}+\frac{1}{4}}\frac{~\b f +(-1)^{n+m}\delta \ub{f}~}{f},
\]
when $\delta =\mathrm{i}=\sqrt{-1}$  it is interesting to find the relation
$U_N=|Z_N|^2$.

\section{Conclusions}\label{sec-5}

BTs contain compatibility and are closely related to integrability of the equations that they connect.
In this paper we have investigated system \eqref{BTuU} as a BT.
When $h$ is affine linear with a generic form \eqref{h-eq}, we made a complete examination
and all consistent triplets are listed in Table 1.
As applications, apart from constructing solutions (cf.\cite{ZhaZ-arxiv-17}),
these BTs in the triplets can be viewed as Lax pairs of $u$-equations, where
wave function $\Phi=(g,f)^T$ can be introduced by taking $U=g/f$
but usually it is hard to introduce an significant  spectral parameter.
When $h$ is beyond affine linear, system \eqref{BTuU} as a BT and the connecting quadrilateral equations (including multi-quadratic ones)
are listed in Table 2.
Some BTs are new and were not listed in \cite{A08,AtkN-IMRN-2014}.
Further applications of the obtained BTs, such as constructing weak Lax pair and rational solutions for multi-quadratic lattice equations,
were also shown in the paper.

\vskip 15pt
\subsection*{Acknowledgments}
This project is  supported by the NSF of China (Nos.11371241, 11631007 and  11601312).

\begin{appendix}

\section{Proof of Theorem \ref{class}}\label{sec-A-1}

\subsection{Multidimensional  consistency: $h$  given in \eqref{h1}}\label{sec-3-2A}

The following discussion is on  the basis of the  CAC condition $\b{\h{\t{u}}}=\h{\t{\b{u}}}=\t{\h{\b{u}}}$ for system \eqref{BTCAC}.
First we investigate the case of \eqref{u1} and \eqref{U1} with  $h(u,\t{u},p)$   given in \eqref{h1}
where we assume
$s_1(p)s_2(p)\neq0$, otherwise \eqref{u1} is not a quadrilateral equation.

\subsubsection{$s_0 = 0$}
In this case,  it can be verified that \eqref{u1} always satisfies the CAC condition $\b{\h{\t{u}}}=\h{\t{\b{u}}}=\t{\h{\b{u}}}$.
Canonically, we make a transformation $u\to \left(-\frac{s_1(p)}{s_2(p)}\right)^n\left(-\frac{s_1(q)}{s_2(q)}\right)^mu$
so that  equation \eqref{u1} is in a neat form
\begin{equation*}
    s_1(p)s_2(p) (u-\t{u}) (\h{u}-\h{\t{u}})=s_1(q)s_2(q)(u-\h{u}) (\t{u}-\h{\t{u}}).
\end{equation*}
Without any loss of generality, by assumption of  $s_1(p)=-s_2(p)=\frac{1}{p}$,
equation \eqref{u1} turns to be the equation Q1($0;p^2,q^2$)\footnote{By this we denote Q1(0) in which replacing $p$ and $q$ by
$p^2$ and $q^2$.}, while
the corresponding equation \eqref{U1} becomes the lpmKdV equation,
\begin{equation}\label{lpmKdV}
   p(U\t{U}-\h{U}\h{\t{U}})-q(U\h{U}-\t{U}\h{\t{U}})=0.
\end{equation}

\subsubsection{$s_0\neq 0$}

\textbf{A. $s_1(p)+s_2(p)=ks_0(p)$ with constant  $k$ }

This goes to the case of $s_0=0$ by taking  $u\to u-k^{-1}$ when $k\neq 0$ and
$u\to u-s_0(p)n-s_0(q)m$ when $k=0$.

\vskip 10 pt
\noindent
\textbf{B. $s_1(p)+s_2(p)=ks_0(p)$ with nonconstant $k$}

Check all terms in $ \b{\h{\t{u}}}=\t{\h{\b{u}}}$,  where  the  coefficient of $u\b{u}^3$   reads
\begin{equation}\label{CAC1}
s_1(p)s_2^4(r)A(B+C),
\end{equation}
where
\begin{align*}
A=\,&s_2(p)s_1(p)-s_2(q)s_1(q), ~~ B=(-s_0(p)s_1(p)+s_0(q)s_1(q))(s_2(r)+s_1(r)),\\
C=\,&(s_1^2(p)+s_2(p)s_1(p)-s_1^2(q)-s_2(q)s_1(q))s_0(r)\\
&+\left[s_0(p)s_2(q)+s_1(q)s_0(p)-s_1(p)s_0(q)-s_0(q)s_2(p)\right]s_1(p)s_1(q)s_2^{-1}(r).
\end{align*}
Letting \eqref{CAC1} vanish  leads to only three subcases. \\

\noindent
{\textbf{ Case B.1.~~ $A=0$}}

It directly results in
\begin{equation}
s_2(p)s_1(p)=c_0,~~ \mathrm{with~ constant}~  c_0.
\label{c0}
\end{equation}
Then from  the  coefficient of $\b{u}^3$ we have
\begin{equation}\label{CAC2}
s_1^4(p)s_1^2(q)EF=0,
\end{equation}
where
\begin{align*}
E=\, &s_0(p)s_1(p)(s_1^2(q)+c_0)-s_0(q)s_1(q)(s_1^2(p)+c_0),\\
F=\, &c_0^2s_1(p)s_0(p)-c_0^2s_0(q)s_1(q)-c_0(s_1^2(p)-s_1^2(q))s_0(r)s_1(r)\\
&+(s_0(q)s_1(p)-s_0(p)s_1(q))s_1(p)s_1(q)s_1^2(r).
\end{align*}

If $E=0$, it returns to the Case A. In fact, when $E=0$, under \eqref{c0} we have
\[ \frac{s_1^2(p)+c_0}{s_1(p)s_0(p)}=\frac{s_1^2(q)+c_0}{s_1(q)s_0(q)}=c_1= \frac{s_1(p)+s_2(p)}{s_0(p)}, \]
with constant $c_1$.

In the case that $F=0$ and  $s_1(p)$ is not a constant, it again returns to the Case A.
In fact,  in this case from $F=0$ we can take  $s_0$ to be the form
\[s_0(t)=c_1s_1(t)+c_2s_1^{-1}(t),\]
where $c_1$ and $c_2$ are constants.
Substituting the above with $t=p,q,r$ into  $F=0$ it turns out that $c_2=c_1c_0$,
from which and \eqref{c0} we find $s_1(p)+s_2(p)=s_0(p)/c_1$, which brings the case to Case A.
Thus the only choice is   $s_1(p)$ to be a constant.
Without loss of generality we suppose  $s_1(p)=1$ and as a consequence of \eqref{c0} we also have $s_2(p)=c_0$.
Then, after checking the remaining terms in $\b{\h{\t{u}}}=\h{\t{\b{u}}}=\t{\h{\b{u}}}$
we find $c_0=1$.
Therefore in this case we have $h(u,\t{u},p)=u+\t{u}+p$, and  \eqref{u1} and \eqref{U1} are nothing but H2 and H1($2p,2q$).\\

\noindent
{\textbf{Case B.2. ~~$A\neq 0,~ B=C=0$}}

$B=0$ yields either $s_2(r)+s_1(r)=0$ or $s_0(p)s_1(p)=c_0$ with constant $c_0$.
The former belongs to Case A and then we consider the later, i.e.
\begin{equation}
s_0(p)s_1(p)=c_0.
\label{c0-B}
\end{equation}
Note that if the term $s_1^2(p)+s_2(p)s_1(p)-s_1^2(q)-s_2(q)s_1(q)$ in $C$ vanishes
we will find  $(s_1(p)+s_2(p))s_1(p)$ to be a constant, which, together with \eqref{c0-B}, again leads to Case A.
If the term does not vanish, from $C=0$ we can assume
there are constants $c_2$ and $c_3$ such that $c_2 s_0(r)+ c_3 s^{-1}_2(r)=0$, i.e.
\begin{equation}
s_0(r)s_2(r)=c_1=-c_3/c_2.
\label{c1-B}
\end{equation}
Making use of \eqref{c0-B} and \eqref{c1-B} we reach
\[\frac{c_0(c_1^2-c_0^2)}{c_1}\left(\frac{1}{s_0^2(p)}-\frac{1}{s_0^2(q)}\right)=0.\]
We ignore solution $s_0=c$ because this leads to $s_1$ and $s_2$ to be constants and then brings  the case to Case A.
Therefore we have $c_1=\pm c_0$. Since $c_1=-c_0$ results in $k=0$ which is Case A,
the only choice is $c_1=c_0$ and in this case the canonical form for $h$ can be
$$h(u,\t{u},p)=\frac{1}{p}(\t{u}+u)-\delta p.$$ 
Then it follows that  \eqref{u1} is A1($\delta;p^2,q^2$),
and the corresponding \eqref{U1} is H3($\delta;2p,2q$).\\

\noindent
\textbf{Case B.3.~~$A B\neq 0,~ B+C=0$}

Since $B\neq 0$, from $B+C=0$ we can assume
\begin{equation}
s_1(r)+s_2(r)=c_1s_0(r)+c_2s_2^{-1}(r),
\label{s1s2}
\end{equation}
with constant $c_1$ and nonzero constant $c_2$ (if $c_2=0$ we back to Case A),
in which
\begin{subequations}
\begin{align}
c_1= & \frac{ s_1(p)(s_2(p)+s_1(p))-s_1(q)(s_2(q)+s_1(q))}{s_0(p)s_1(p)-s_0(q)s_1(q)},\label{c1}\\
c_2=& \frac{\left[s_0(p)(s_2(q)+s_1(q))-s_0(q)(s_1(p)+s_2(p))\right]s_1(p)s_1(q)}{s_0(p)s_1(p)-s_0(q)s_1(q)}.\label{c2}
\end{align}
\end{subequations}
Note that in Case B $s_0(r)$ and $s_2^{-1}(r)$ must be linearly independent.
Separate $p$ and $q$ in \eqref{c1} we find
$s_1(p)(s_2(p)+s_1(p))-c_1s_0(p)s_1(p)=c_0$ with nonzero constant $c_0$, which, together with relation \eqref{s1s2},
yields $s_2(p)=c_2c_0^{-1}s_1(p)$. Now, substituting this relation and \eqref{s1s2} into \eqref{c2} we find
$c_2=c_0$, and consequently, $s_1(p)=s_2(p)$.
Thus, \eqref{h1} of this case reads
\begin{equation}
h(u,\t{u},p)=s_0(p)+s_1(p)u+s_1(p)\t{u},
\label{hh1}
\end{equation}
and \eqref{s1s2} reads
\[2s_1(r)=c_1s_0(r)+c_0s_1^{-1}(r).\]
We note that \eqref{hh1} is already discussed in \cite{A08}.
After making $u\to u-1/c_1$ in \eqref{hh1} we can consider
\[h(u,\t{u},p)=-\frac{c_0}{c_1 s_1(p)}+s_1(p)u+s_1(p)\t{u},
\]
which, however, leads to Case B.2.
So, nothing new is contributed in this case.

\subsection{Multidimensional  consistency:  $h$  given in \eqref{h2}}\label{sec-3-3A}

For $h(u,\t{u},p)$ given in \eqref{h2}, we suppose $s_0(p),s_3(p)\neq0$
and they are not constants simultaneously, so that we can keep the freedom of $p, q$.
By the same manner, from the coefficient of $\t{u}\h{u}^3$ in $ \b{\h{\t{u}}}=\t{\h{\b{u}}}$ , we find
\begin{equation}
 s_3^2(p)s_0^2(r)-s_3^2(r)s_0^2(p)+s_3^2(r)s_0^2(q)-s_3^2(q)s_0^2(r)+s_3^2(q)s_0^2(p)-s_3^2(p)s_0^2(q)=0.
 \label{u2-CAC}
\end{equation}

If $s_3(p)$ is a constant, setting $s_3(p)=1, s_0(p)=p\delta$, we have $h(u,\t{u},p)=u\t{u}+p\delta$, \eqref{u2} is  H3($\delta$)
and \eqref{U2} is  H3($-\delta$).

If $s_0(p)$ is a constant,  by setting $s_0(p)=1, s_3(p)=p\delta$, it comes out  that $h(u,\t{u},p)=\delta pu\t{u}+1$.
By transformation $u\to u^{-1}$, \eqref{u2} also reaches  H3($\delta$).

If neither $s_0(p)$ or $s_3(p)$ is a constant, $\partial_p\partial_q\eqref{u2-CAC}$ yields
\[\frac{ (s_0^2(p))'}{(s_3^2(p))'} =\frac{ (s_0^2(q))'}{(s_3^2(q))'}=c_1,\]
 which leads to $s_0^2(p)=c_1s_3^2(p)+c_2$ with nonzero constant $c_1$.
When $c_2=0$,  it yields $s_0(p)=\delta s_3(p)$. Taking  $u\to (-\delta)^{1/2}u$ and $s_3(p)=1/p$,
we can write equation \eqref{u2} as
\begin{equation}\label{new}
   q^2(u\t{u}-1)(\h{u}\h{\t{u}}-1)=p^2(u\h{u}-1)(\t{u}\h{\t{u}}-1),
\end{equation}
and the corresponding \eqref{U2} is  H3($1$)  after transformation $U\to U^{-1}$.
When $c_2\neq 0$, we can scale $c_1$ to be $1$ by taking  $u\to (c_1)^{1/4}u$.
Then, setting $c_2=1, s_3(p)=\frac{-p}{\sqrt{1-p^2}}, s_0(p)=\frac{1}{\sqrt{1-p^2}}$, equation \eqref{u2} reduces to A2
 and  \eqref{U2} reduces to A2($\sqrt{1-p^2}, \sqrt{1-q^2}$).

We note that \eqref{new} is not a new equation.
It is related to Q1($0;p^2,q^2$) by $u\to u^{(-1)^{n+m}}$.

As a conclusion we reach Theorem \ref{class}.

\section{Proof of Theorem \ref{t-q1}}\label{sec-B}

According to the BT \eqref{BTQ1-1} and assumption \eqref{U} and \eqref{vu},
we can assume $v_i$ have the following special form
\begin{equation}
v_1=(-1)^{\theta_1}\displaystyle{\sum_{i=0}^N}f_{N-i}x^i,
~v_2=(-1)^{\theta_2}\displaystyle{\sum_{i=0}^N}g_{N-i}x^i
\label{vi}
\end{equation}
with constants $f_i, g_i$ to be determined,
where $\theta_i$ can be arbitrary functions of $n,m$.
 First we have the following.
 
\begin{lemma}\label{lem1}
With $U$ defined in \eqref{U} and $v_1, v_2$ defined above, when $N\geq1$, we have
\begin{equation}\label{v}
    v_1=\displaystyle{\sum_{i=0}^N}f_{N-i}x^i, ~~v_2=\displaystyle{\sum_{i=0}^N}g_{N-i}x^i.
\end{equation}
\end{lemma}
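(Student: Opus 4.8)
The plan is to substitute the signed ansatz \eqref{vi} into the compatibility relation \eqref{v12} and to play the bounded sign factors off against the growth of the polynomial parts, which is where the hypothesis $N\geq1$ enters. First I would separate the sign-free content of \eqref{BTQ1-1} from the signs: writing $P_1(x)=\sum_{i=0}^N f_{N-i}x^i$ and $P_2(x)=\sum_{i=0}^N g_{N-i}x^i$, squaring \eqref{vi} kills the factors $(-1)^{\theta_i}$ and gives $P_1^2=pU\t{U}+\delta^2p^2$ and $P_2^2=qU\h{U}+\delta^2q^2$. Since $U$ in \eqref{U} has exact degree $N$ (because $c_0\neq0$), both $P_1$ and $P_2$ have exact degree $N$, so their leading coefficients $f_0,g_0$ are nonzero. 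This is the point that distinguishes $N\geq1$ from $N=0$: for $N\geq1$ the shifted differences $P_1(x+b)-P_1(x)$ and $P_2(x+a)-P_2(x)$ are nonzero polynomials of degree $N-1$, whereas at $N=0$ they vanish identically, which is exactly what allows the sign alternation to survive in the $N=0$ solutions listed after the theorem.

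Next I would read \eqref{v12}, namely $\h{v}_1-v_1=\t{v}_2-v_2$, at top order in $x$. Using $\t{x}=x+a$ and $\h{x}=x+b$ it becomes
\begin{equation*}
(-1)^{\h{\theta_1}}P_1(x+b)-(-1)^{\theta_1}P_1(x)=(-1)^{\t{\theta_2}}P_2(x+a)-(-1)^{\theta_2}P_2(x),
\end{equation*}
whose coefficient of $x^N$ is $f_0\big[(-1)^{\h{\theta_1}}-(-1)^{\theta_1}\big]$ on the left and $g_0\big[(-1)^{\t{\theta_2}}-(-1)^{\theta_2}\big]$ on the right. Because the sign factors are bounded while $x=an+bm+\gamma$ takes arbitrarily large values over the lattice, the $x^N$ term cannot be absorbed by the $O(x^{N-1})$ remainder unless it cancels at every point with $|x|$ large; hence
\begin{equation*}
f_0\big[(-1)^{\h{\theta_1}}-(-1)^{\theta_1}\big]=g_0\big[(-1)^{\t{\theta_2}}-(-1)^{\theta_2}\big].
\end{equation*}
As each bracket lies in $\{0,\pm2\}$ and $f_0,g_0\neq0$, a jump of $\theta_1$ under the $m$-shift forces a simultaneous jump of $\theta_2$ under the $n$-shift, and conversely.

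It then remains to dispose of the two surviving alternatives. In the no-jump case the relation reduces to $(-1)^{\theta_1}[P_1(x+b)-P_1(x)]=(-1)^{\theta_2}[P_2(x+a)-P_2(x)]$; since the bracketed polynomials are nonzero, their ratio can take the value $(-1)^{\theta_1-\theta_2}\in\{\pm1\}$ over infinitely many $x$ only if they are proportional, so $(-1)^{\theta_1-\theta_2}$ is a global constant, and together with $\theta_1$ being $m$-independent and $\theta_2$ being $n$-independent this makes both signs globally constant, whence absorbing the overall sign into $f_i,g_i$ yields \eqref{v}. In the synchronized-jump case the same reading instead produces $(-1)^{\theta_1}[P_1(x+b)+P_1(x)]=(-1)^{\theta_2}[P_2(x+a)+P_2(x)]$, the identity that would reinstate a $(-1)^{n+m}$-type alternation; I expect this to be the main obstacle, and I would rule it out by comparing the subleading coefficients of the explicit $P_1,P_2$ coming from the square-root conditions on $pU\t{U}+\delta^2p^2$ and $qU\h{U}+\delta^2q^2$, which cannot satisfy such a proportionality once $N\geq1$. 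A minor technical point, handled by letting $x$ range over an infinite set on which the relevant sign pattern is constant, is the promotion of these pointwise relations to genuine polynomial identities.
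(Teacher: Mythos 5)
Your setup reproduces the paper's key identities — from the leading ($x^{2N}$) coefficient of \eqref{BTQ1-1} you effectively have $f_0^2=c_0^2p$ and $g_0^2=c_0^2q$, and from the $x^N$ coefficient of \eqref{v12} you get $f_0\bigl[(-1)^{\h{\theta}_1}-(-1)^{\theta_1}\bigr]=g_0\bigl[(-1)^{\t{\theta}_2}-(-1)^{\theta_2}\bigr]$ — but there is a genuine gap at the decisive step: you never actually rule out the ``synchronized-jump'' case, you only announce that you \emph{would} do so by comparing subleading coefficients. That case is precisely the content of the lemma (it is what would permit a $(-1)^{n+m}$-type alternation surviving into $v_1,v_2$), so leaving it as an expectation leaves the lemma unproved. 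Moreover, the proposed route via subleading coefficients is not the one that works cleanly; the case dies already at leading order. If at some lattice point both brackets are nonzero, each equals $\pm2$, so the displayed relation forces $f_0=\pm g_0$, hence $f_0^2=g_0^2$, hence $c_0^2p=c_0^2q$ and $p=q$ — contradicting the independence of the two lattice parameters (this is exactly the paper's move: square the $x^N$-coefficient identity to get $\bigl[(-1)^{\h{\theta}_1}-(-1)^{\theta_1}\bigr]^2p=\bigl[(-1)^{\t{\theta}_2}-(-1)^{\theta_2}\bigr]^2q$ and conclude both brackets vanish at every point). You had both ingredients in hand and did not combine them.

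A secondary, smaller issue: you treat ``no jump'' versus ``synchronized jump'' as global alternatives, whereas a priori the brackets could vanish at some lattice sites and not at others; the pointwise squaring argument above disposes of this uniformly. Once every bracket vanishes you correctly get $\theta_1=\theta_1(n)$, $\theta_2=\theta_2(m)$, and your proportionality argument in the no-jump case (equivalently, the paper's reading of the $x^{N-1}$ coefficient, which gives $(-1)^{\theta_1}f_0b=(-1)^{\theta_2}g_0a$ and hence that $(-1)^{\theta_1-\theta_2}$ is a global constant) finishes the proof; this part of your writeup matches the paper. So the structure is right, but the central case analysis is incomplete as written and needs the $p\neq q$ argument to close.
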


\begin{proof}
 Substituting \eqref{U} and \eqref{vi} into system \eqref{BTQ1-1} and \eqref{v12}, from the coefficient of
 the leading term $x^{2N}$ (if $N\geq 1)$ in \eqref{BTQ1-1} we find
\begin{equation}\label{f0g0}
    f_0^2=c_0^2p,\ \ g_0^2=c_0^2q,
\end{equation}
which means $\frac{f_0^2}{g_0^2}=\frac{p}{q}$, and  from the coefficient of $x^{N} $ in \eqref{v12} we find
\begin{equation*}
   f_0\left((-1)^{\h{\theta}_1}-(-1)^{\theta_1}\right)=g_0\left((-1)^{\t{\theta}_2}-(-1)^{\theta_2}\right).
\end{equation*}
Consequently we have
\begin{equation*}
    \left((-1)^{\h{\theta}_1}-(-1)^{\theta_1}\right)^2p=\left((-1)^{\t{\theta}_2}-(-1)^{\theta_2}\right)^2q.
\end{equation*}
Since $p, q$ are independent constants, it follows that $\theta_1=\theta_1(n), \theta_2=\theta_2(m)$.
Consequently, from the coefficient of $x^{N-1} (N\geq1)$ in \eqref{v12} we have
$    (-1)^{\theta_1}f_0b=(-1)^{\theta_2}g_0a$,
which leads to the fact that $\theta_1$ and $\theta_2$ are constants.
 Noticing that $\theta_1$ and $\theta_2$ can be absorbed into $f_i, g_i$, we can assume $\theta_1=\theta_2=0$ without loss of generality.
 Thus \eqref{vi} becomes \eqref{v}.
\end{proof}

\begin{lemma}\label{lem2}
With $U$ defined in \eqref{U} and $v_1, v_2$ defined in \eqref{v}, when $N \geq 1$, the allowed values of $N$ are only 1,2.
\end{lemma}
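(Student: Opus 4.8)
Lemma 2 asserts that, given the polynomial ansatz $U = \sum_{i=0}^N c_{N-i} x^i$ together with the polynomial forms $v_1 = \sum_{i=0}^N f_{N-i} x^i$ and $v_2 = \sum_{i=0}^N g_{N-i} x^i$ from Lemma 1, the only degrees $N \ge 1$ that can satisfy the system \eqref{BTQ1-11} are $N = 1$ and $N = 2$. The plan is a degree-counting argument: both sides of each equation in \eqref{BTQ1-11} are polynomials in $x$, and I will match coefficients of the top powers to derive recursions among $f_i, g_i, c_i$, then show these force a contradiction once $N \ge 3$.

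Let me set up the bookkeeping. The left-hand side $v_1^2 - \delta^2 p^2$ is a polynomial of degree $2N$ in $x$, and the right-hand side $pU\t U$ is also degree $2N$. Since $\t U$ is $U$ evaluated with the shift $n \to n+1$, i.e. $x \to x+a$, I would write $\t U = \sum_i c_{N-i}(x+a)^i$ and expand. Comparing the leading coefficient $x^{2N}$ recovers \eqref{f0g0} (already established in Lemma 1's proof). The first step is then to push to the next-order coefficients: the $x^{2N-1}$ coefficient gives a linear relation fixing $f_1$ (resp.\ $g_1$) in terms of $c_1$ and $a$ (resp.\ $b$), and the $x^{2N-2}$ coefficient fixes $f_2, g_2$, and so on down the tower. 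Each equation \eqref{BTQ1-11} thus determines the $f_i$ and $g_i$ from the $c_i$ recursively, up to the sign ambiguity in $f_0, g_0$.

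The decisive constraint comes from the consistency condition \eqref{v12}, $\h v_1 - v_1 = \t v_2 - v_2$. Both differences are polynomials of degree $N-1$ in $x$ (the leading $x^N$ terms cancel because $\h v_1 - v_1$ involves $v_1(x+b) - v_1(x)$ and $\t v_2 - v_2$ involves $v_2(x+a) - v_2(x)$), and matching their coefficients relates the shift data $a,b$ to the coefficients. I expect that combining the recursions from \eqref{BTQ1-11} with those from \eqref{v12} overdetermines the system: for small $N$ there are enough free parameters ($a, b, \gamma, c_i, \delta, p, q$) to absorb the constraints, but for $N \ge 3$ the number of independent coefficient equations outstrips the parameters, producing an inconsistency such as forcing a leading coefficient $c_0 = 0$, contradicting $c_0 \ne 0$.

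The main obstacle will be the organization of the coefficient comparison so that the obstruction at $N = 3$ is made transparent rather than buried in algebra. Concretely, I anticipate the hard step is tracking how the intermediate-order coefficients of $pU\t U$ couple the $c_i$ across different powers of $x$ via the binomial shifts in $\t U$ and $\h U$, and then feeding the resulting expressions for $f_i, g_i$ into \eqref{v12} to isolate one scalar equation that fails for $N \ge 3$. I would aim to exhibit explicitly the first coefficient relation that cannot be met once $N = 3$: after solving \eqref{BTQ1-11} for $f_1, f_2, g_1, g_2$ and substituting into the $x^{N-2}$ coefficient of \eqref{v12}, the residual equation should carry a factor that vanishes only when $N \in \{1, 2\}$, which closes the argument.
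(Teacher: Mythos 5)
Your overall strategy coincides with the paper's: expand both equations of \eqref{BTQ1-11} and the consistency condition \eqref{v12} in powers of $x$, use the top coefficients of \eqref{BTQ1-11} to solve recursively for $f_1,g_1,f_2,g_2$ in terms of the $c_i$ and $a,b$, and then feed these into a lower-order coefficient of \eqref{v12} to obtain the obstruction. However, as written the proposal is a plan rather than a proof --- the entire content of the lemma lives in the coefficient computation you defer --- and your anticipated closing mechanism is wrong in a way that matters. You predict that the obstruction sits in the $x^{N-2}$ coefficient of \eqref{v12} and ``carries a factor that vanishes only when $N\in\{1,2\}$.'' In fact, after substituting the expressions for $f_1,g_1$ (and the parametrization $p=a^2$, $q=b^2$, $f_0=c_0a$, $g_0=c_0b$ forced by the $x^{N-1}$ coefficient of \eqref{v12} together with \eqref{f0g0} --- a step you do not mention but which is essential to make the later algebra close), the $x^{N-2}$ coefficient of \eqref{v12} \emph{vanishes identically}. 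The genuine obstruction only appears one order lower, in the $x^{N-3}$ coefficient, which after substituting $f_2,g_2$ reduces to
\begin{equation*}
abc_0(a^2-b^2)(N^2+2N)=0 .
\end{equation*}
The factor $N^2+2N=N(N+2)$ does \emph{not} vanish for $N=1,2$; the reason those values survive is simply that the power $x^{N-3}$ does not occur unless $N\geq 3$, so the equation above is vacuous for $N\leq 2$. If the obstruction really sat at order $x^{N-2}$ as you propose, the same logic would exclude $N=2$ as well and the lemma would read $N=1$ only.

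Two further points. First, your heuristic that ``the number of independent coefficient equations outstrips the parameters'' for $N\geq 3$ is not a proof and is not how the argument actually closes: the system is solvable order by order for the $f_i,g_i$ at every $N$, and the single scalar identity displayed above is what fails. Second, note that the lemma as stated in the paper starts from $v_i$ already in the form \eqref{v}; the sign normalization $(-1)^{\theta_i}$ and the reduction to constant $\theta_i$ is the content of the preceding Lemma \ref{lem1}, so you are right to take \eqref{v} as given, but you should then actually carry out the three coefficient extractions (at $x^{2N-1}$ and $x^{2N-2}$ in \eqref{BTQ1-11}, and at $x^{N-1}$, $x^{N-2}$, $x^{N-3}$ in \eqref{v12}) to exhibit the displayed identity; without them the proof is not complete.
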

\begin{proof} Analyzing the coefficient of $x^{N-1} (N\geq1)$ in \eqref{v12}, we obtain $ f_0b=g_0a$,
which leads to $\frac{a^2}{b^2}=\frac{p}{q}$ in light of \eqref{f0g0}.
So we have $p=\theta a^2, q=\theta b^2$ with constant $\theta$.
$\theta$ can be scaled to be 1 using $\delta$, therefore we  set
\begin{equation}\label{pq}
    p=a^2,\quad q=b^2,
\end{equation}
and $f_0=tc_0a,~ g_0=tc_0b,~ t^2=1$ from \eqref{f0g0}.
One can always take $t$ to be 1 as system \eqref{BTQ1-1} remains invariant  under $U\to  -U$. Consequently we have
\begin{equation}\label{f0g0c0}
   f_0=c_0a, \ \ g_0=c_0b.
\end{equation}
Substituting \eqref{v}, \eqref{pq} and \eqref{f0g0c0} into the coefficient of $x^{2N-1} (N\geq1)$ in \eqref{BTQ1-1}, we can work out
\begin{equation}\label{f1g1}
    f_1=\frac{2c_1+c_0aN}{2}a,\ \ g_1=\frac{2c_1+c_0bN}{2}b.
\end{equation}
Then substituting \eqref{pq}-\eqref{f1g1} into  the coefficient of $x^{N-2} (N\geq2)$ in \eqref{v12}, we find  it varnishes.
Next, analyzing the coefficient of $x^{2N-2} (N\geq2)$ in \eqref{BTQ1-1}, we have
\begin{eqnarray}
\begin{aligned}\label{f2g2}
& &f_2=\frac{a}{8}\left(8c_2+4ac_1(N-1)+a^2c_0(N^2-2N)\right),\\
& &g_2=\frac{b}{8}\left(8c_2+4bc_1(N-1)+b^2c_0(N^2-2N)\right).
\end{aligned}
\end{eqnarray}
Substituting \eqref{pq}-\eqref{f2g2} into  the coefficient of $x^{N-3} (N\geq3)$ in \eqref{v12}, we obtain
\begin{equation*}
    abc_0(a^2-b^2)(N^2+2N)=0,
\end{equation*}
which admits zero option if  $N\geq3$.
Therefore all possible choices of $N$ can only be $1, 2$.
\end{proof}

\end{appendix}

\vskip 20pt

\end{document}